\newtheorem{lemma}{Lemma}
\newtheorem{proposition}{Proposition}
\newtheorem{defn}{Definition}
\newtheorem{prop}{Proposition}\def\PRO{\begin{prop}}\def\ORP{\end{prop}}
\newtheorem{coro}{Corollary}\def\COR{\begin{coro}}\def\ROC{\end{coro}}
\newtheorem{theo}{Theorem}\def\TH{\begin{theo}}\def\HT{\end{theo}}
\def\TH{\begin{theo}}\def\HT{\end{theo}}
\newtheorem{defi}[prop]{Definition}\def\DE{\begin{defi}}\def\ED{\end{defi}}
\newtheorem{lemme}[prop]{Lemma}\def\LE{\begin{lemme}}\def\EL{\end{lemme}}
\def\ket#1{\left| #1 \right\rangle}
\def\density#1{\left| #1 \right\rangle\left\langle #1 \right|}
\def\bra#1{\left\langle #1 \right|}
\newcommand{\beq}{\begin{equation}}
\newcommand{\Tr}{\text{Tr}}
\newcommand{\eeq}{\end{equation}}
\begin{document}
	\title{Imperfect 1-out-of-2 quantum oblivious transfer: bounds, a protocol, and its experimental implementation
	}
	\author{Ryan Amiri}
	\affiliation{SUPA, Institute of Photonics and Quantum Sciences, Heriot-Watt University, Edinburgh EH14 4AS, United Kingdom}
	\author{Robert St\'{a}rek}
	\affiliation{Department of Optics, Palacky University, 17. listopadu 1192/12, 77900 Olomouc, Czech Republic}
	\author{David Reichmuth}
	\affiliation{SUPA, Institute of Photonics and Quantum Sciences, Heriot-Watt University, Edinburgh EH14 4AS, United Kingdom}
    \author{Ittoop V. Puthoor}	
    \affiliation{SUPA, Institute of Photonics and Quantum Sciences, Heriot-Watt University, Edinburgh EH14 4AS, United Kingdom}
	\author{Michal Mi\v{c}uda}
	\affiliation{Department of Optics, Palacky University, 17. listopadu 1192/12, 77900 Olomouc, Czech Republic}
		\author{Ladislav Mi\v{s}ta,~Jr.}
	\affiliation{Department of Optics, Palacky University, 17. listopadu 1192/12, 77900 Olomouc, Czech Republic}
	\author{Miloslav Du\v{s}ek}
	\affiliation{Department of Optics, Palacky University, 17. listopadu 1192/12, 77900 Olomouc, Czech Republic}
	\author{Petros Wallden}
	\affiliation{LFCS, School of Informatics, University of Edinburgh, 10 Crichton Street, Edinburgh EH8 9AB, United Kingdom}
	\author{Erika Andersson}
	\affiliation{SUPA, Institute of Photonics and Quantum Sciences, Heriot-Watt University, Edinburgh EH14 4AS, United Kingdom}
	\email{ra2@hw.ac.uk}

	\begin{abstract}
		Oblivious transfer is an important primitive in modern cryptography. Applications include secure multiparty computation, oblivious sampling, e-voting, and signatures. 
		Information-theoretically secure perfect 1-out-of 2 oblivious transfer is impossible to achieve. Imperfect variants, where both participants' ability to cheat is still limited, are possible using quantum means while remaining classically impossible. Precisely what security parameters 
		are attainable remains unknown. 
		We introduce a  
		theoretical framework for studying semi-random quantum oblivious transfer, which is shown equivalent to regular oblivious transfer in terms of cheating probabilities. We then use it to derive bounds on cheating. We also present a protocol with lower cheating probabilities than previous schemes, together with its optical realisation. 
		We show that a lower bound of 2/3 on the minimum achievable cheating probability can be directly derived for semi-random protocols using  a 
		different method and definition of cheating than used previously. The lower bound increases from $2/3$ to approximately $0.749$ if the states output by the protocol are pure and symmetric. 
		The oblivious transfer scheme we present uses unambiguous state elimination measurements and can be implemented with the same technological requirements as standard quantum cryptography. In particular, it does not require honest participants to prepare or measure entangled states. 
		The cheating probabilities  are 3/4 and approximately 0.729 for sender and receiver respectively, which is lower than in existing protocols. Using a photonic test-bed, we have implemented the protocol with honest parties, as well as optimal cheating strategies. 
		Due to the asymmetry of the receiver's and sender's cheating probabilities, the protocol can be combined with a ``trivial" protocol to achieve an overall protocol with lower average cheating probabilities of approximately 0.74 for both sender and receiver. This demonstrates that interestingly, protocols where the final output states are pure and symmetric are not optimal in terms of average cheating probability.
	\end{abstract}
	
	\maketitle
	
	\section{Introduction}

	Following the discovery of quantum key distribution in 1984 \cite{bb84}, there arose a general optimism that quantum mechanics may provide a means to perform multiparty computations with information-theoretic security. Despite this early confidence, the history of secure two-party computations is characterised by mainly negative results. Mayers and Lo \cite{May97,Lo97} proved that all one-sided two-party computations are insecure in the quantum setting, meaning that it is impossible to perform important protocols such as bit commitment and oblivious transfer (OT) with information-theoretic security. Nevertheless, imperfect variants of these protocols remain possible, and it has been an interesting and productive open question to determine the optimal security parameters achievable for some important two-party computations. 
	
	For many cryptographic primitives, this question has been definitively answered. For strong coin flipping, Kitaev \cite{Kitaev} introduced the semi-definite programming formalism to show that the product of Alice's and Bob's cheating probabilities must be greater than $1/2$, implying that the minimum cheating probability is at least $1/\sqrt{2}$. For weak coin flipping, Mochon \cite{Moc07} showed that the minimum cheating probability is at least $1/2 + \epsilon$ for any $\epsilon>0$. In the same paper a protocol achieving this bound is presented, showing that the bound is tight. Chailloux and Kerenidis \cite{Cha09} used these results on weak coin flipping to generate a protocol for strong coin flipping achieving Kitaev's bound. Lastly, for quantum bit commitment, Chailloux and Kerenidis \cite{Cha11} proved that the minimum cheating probability is $0.739$, and presented a protocol achieving this bias. Thus, for bit commitment, weak coin flipping, and strong coin flipping the achievability bounds are tight with the known protocols.
	
	For OT on the other hand, the situation is not as clear. Classically, it is impossible to achieve even limited security for OT in the information-theoretic setting, since one party can always cheat with certainty. On the other hand, quantum mechanics allows for imperfect protocols, in which the participants are able to cheat but their abilities are limited. 
	
	OT is a fundamental primitive in cryptography. Its importance stems from the fact that it can be used as the foundation for secure two-party computations; with oblivious transfer, all secure two-party computations are possible \cite{GV88,Kil88}. 
	OT exists in many different flavours, all with slightly different definitions and notions of security. 
	It was first introduced informally in 1970 by Wiesner as ``a means for transmitting two messages either but not both of which may be received" \cite{Wie83}, and subsequently formalised as 1-out-of-2 oblivious transfer (1-2 OT) in \cite{EGL85}. In related work, Rabin \cite{Rab81} introduced a protocol (now called Rabin OT), which was later shown by Cr{\'e}peau \cite{Cre88} to be classically equivalent to 1-2 OT, in the sense that if it is possible to do one, it is possible to use this to implement the other. Various ``weaker" variants of OT have also been proposed, most notably Generalised OT, XOR OT and Universal OT \cite{BC97}, but all have been shown to be equivalent to 1-2 OT \cite{BCW03} in the classical setting. The equivalence is believed to also hold in the quantum setting, but the reduction proofs may need to be revised. There is also work by Damg{\aa}rd, Fehr, Salvail and Schaffner \cite{DFSS05} who define OT in a slightly different way, and characterise security in terms of information leakage. With these definitions (and their quantum counterparts), the authors describe a 1-2 OT protocol which is secure in the bounded quantum storage model. 
	Spacetime-constrained quantum OT protocols have also been proposed~\cite{Garcia}, requiring agents at different locations in spacetime, giving constraints on where in spacetime bit values can be obtained. 
	Recently, a device-independent quantum XOR oblivious transfer protocol was proposed \cite{KST20}. The protocol uses a shared entangled state to reveal cheating. Another version of ``imperfect'' oblivious transfer was considered and experimentally implemented in \cite{RKB+18}, where the authors could achieve vanishing cheating advantage for both sides, at the expense of having a protocol that sometimes fails during honest execution.
	
	In this paper we consider stand-alone quantum protocols for 1-2 OT, including an experimental implementation of such a protocol, and are concerned only with information-theoretic security. As mentioned above, perfect security in this setting is impossible. The best known lower bound on the achievable bias in 1-2 OT protocols is due to Chailloux, Gutoski and Sikora \cite{CGS13}, who show that the minimum cheating probability is at least $2/3$ if participants are ``semi-honest".
	With the definition of cheating used in~\cite{CGS13}, with ``semi-honest" participants, this bound is tight.
	However, the best known OT protocol has a cheating probability of $0.75$ if parties are not assumed to be semi-honest~\cite{CKS10}, meaning that there is a gap between what is known to be achievable, and what is known to be impossible. Narrowing this gap either way -- obtaining higher and thus tighter lower bounds on cheating probabilities, or finding concrete protocols with smaller cheating probabilities, leading to lower upper bounds -- is the main target of this paper.
	In order to obtain lower upper bounds, we consider general classes of protocols (either completely general or with some restrictions), but limit the capabilities of adversaries. This therefore provides only lower bounds on cheating probabilities, applicable to \emph{all} protocols within the considered class. To obtain upper bounds on cheating probabilities, we give a specific protocol, and then consider the most general attacks. This therefore provides an upper bound on achievable cheating probabilities, in the sense that the best protocol can perform at least as well as the specific protocol we give.
	There is also a subtlety regarding the requirement of semi-honesty, and related to this, to what extent dishonest parties can always obtain the information they would have obtained if they had been honest especially when considering variants of oblivious transfer that are not deterministic. We will return to this below.
	
	Our paper contains four main contributions:
	\begin{enumerate}
		
		\item We introduce the concept of Semi-random OT and prove a functional equivalence with respect to the cheating probabilities between 1-2 OT and Semi-random OT. We further describe a general framework for Semi-random OT. 
		\item We use this framework to show that the minimum achievable bound on the cheating probability is $2/3$. This agrees with the result in \cite{CGS13} for regular (deterministic) oblivious transfer, but in our case we  do not assume that parties are semi-honest.
		We also increase the lower bound on the minimum achievable cheating probability for 1-2 quantum OT protocols to $0.749$ if the states in the final round of the protocol when the parties are honest are pure and symmetric. We parametrise Alice's and Bob's ability to cheat in terms of a single variable $F$, related to the fidelity of the protocol output states.
		This parametrisation suggests how to construct schemes when either sender or receiver dishonesty is prioritized. \ That is, sender and receiver can have different cheating probabilities, and one can derive bounds for such situations.
		Such a scenario arises in the context of quantum signature schemes \cite{WDKA15,AWKA16}, and the derived bounds may prove useful for understanding the potential application of imperfect OT to signatures.
		
		\item We illustrate our construction by giving an OT protocol relying on unambiguous state elimination (USE) measurements. The protocol improves on previous protocols in the sense that it decreases the cheating probability of the receiver and is easier to implement. It also highlights the connection between USE measurements and 1-2 OT, and provides a new application for this relatively seldom used type of measurement. The security parameters achieved are almost tight with the bounds for protocols using pure symmetric states proven in this paper. In this protocol, one party has a smaller cheating probability than the other. This is not captured by the overall cheating probability, defined as the maximum of the cheating probabilities of either party. Such protocols might however be used for applications where restricting cheating by one party is prioritised. Such a protocol can also
		be combined with a ``trivial" protocol, to achieve a protocol with lower average cheating probability, where both sender and receiver can cheat with probability at most 0.74.  This is lower than the bound for protocols using pure symmetric states and constitutes an improvement on previously known protocols.
		
	\item Last, but not least, we present an optical realisation of the protocol we have given. In principle, an implementation of the protocol needs only the same components used for standard BB84 quantum key distribution. Each of the two qubits can be encoded into a single photon, sent individually to Bob, and measured using the same components as in BB84 quantum key distribution. That is, to implement our protocol, one only needs the components used for standard quantum key distribution. Our setup is however slightly different, because we want not only to test the protocol with the honest parties, but also experimentally implement the optimal cheating strategies and verify the predicted cheating probabilities. It is obvious that for any (e.g. commercial) application, the evaluation of the feasibility/practicality of the protocol considers the components required for an honest execution. Realizing cheating strategies is still of interest to evaluate how secure is the protocol in practice (c.f. quantum hacking). To implement these optimal cheating strategies 	requires usage of a nontrivial entangled state. We therefore encode two qubits into a single photon, and employ a linear optical quantum gate to prepare an entangled state where these two qubits are entangled with a third qubit retained by Alice, which is encoded in a separate second photon. The experimental results for both honest and cheating parties agree well with theoretical values, demonstrating that the protocol is feasible also when realised in this way.
		
	\end{enumerate}
	
	The paper is organised as follows. We begin in Section \ref{sec:defn} by defining 1-2 OT and Semi-random OT, stating an equivalence between the cheating probabilities for each. 
	In Section \ref{sec:general} we describe a general framework for Semi-random OT protocols and consider specific undetectable cheating strategies always available to Alice and Bob. We analyse these strategies to lower bound the achievable cheating probabilities for unbounded adversaries in 1-2 OT. In Section \ref{sec:use} we first introduce unambiguous measurements, in particular unambiguous state elimination (USE) measurements, and motivate their use in cryptography. We describe a semi-random OT protocol which employs USE measurements and analyse its security in the asymptotic limit. In Section \ref{sec:exp}, we present the experimental implementation of this protocol.

	\section{Definitions} \label{sec:defn}
	Intuitively, 1-2 OT is a two-party protocol in which Alice chooses two input bits, $x_0$ and $x_1$, and Bob chooses a single input bit $b$. The protocol outputs $x_b$ to Bob with the guarantees that Alice does not know $b$, and that Bob does not know $x_{b\oplus 1}$. A cheating Alice aims to find the value of $b$, while a cheating Bob aims to correctly guess both $x_0$ and $x_1$.
	
	At this point it is worth stressing that whenever we speak of cheating probability of one party, we assume that the other party executes the protocol honestly. This is a standard assumption in all cryptographic protocols with two competing parties (such as coin flip, bit commitment, and all versions of oblivious transfer) and we will adapt it in all the paper. The reason for this assumption is twofold. Firstly one is interested in ensuring that the ``interests'' of honest parties are secured, while it is less relevant to give guarantees to a cheating party. The second reason is that even defining what constitutes a cheating requires the other party to behave (at least to a point) honestly. For example, how can Bob cheat (guessing both $x_0$ and $x_1$) if Alice has not even chosen two bits?
	
	\begin{defn}{\emph{\cite{CKS10}}} \label{def:OT}
		A 1-2 quantum OT protocol is a protocol between two parties, Alice and Bob, such that
		\begin{itemize}[leftmargin=*]
			\item Alice has inputs $x_0, x_1\in \{0,1\}$ and Bob has input $b\in \{0,1\}$. At the beginning of the protocol, Alice has no information about $b$ and Bob has no information about $(x_0, x_1)$.
			\item At the end of the protocol, Bob outputs $y$ or Abort and Alice can either Abort or not.
			\item If Alice and Bob are honest, they never Abort, $y = x_b$, Alice has no information about $b$ and Bob has no information about $x_{b\oplus 1}$.
			\item $A_{OT} := \sup \{\text{\emph{Pr}}[\text{Alice correctly guesses $b ~\land$ Bob does}$ \\ 
			$\text{ \hspace*{1cm} not Abort}]\}$\\ \hspace*{0.8cm}  $= \frac{1}{2} +\epsilon_A$
			\item $B_{OT} := \sup \{\text{\emph{Pr}}[\text{Bob correctly guesses $(x_0,x_1)~\land$ Alice}$ \\ 
			$\text{ \hspace*{1cm} does not Abort}]\} $ \\ \hspace*{0.8cm} $= \frac{1}{2} +\epsilon_B$
		\end{itemize}
	\end{defn}
	The suprema are taken over all cheating strategies available to Alice and Bob. We note that there are also less common variants of the definition of $B_{OT}$, all with subtly different cheating implications. Ref. \cite{SCK14} defines cheating in terms of Bob being able to guess the XOR of Alice's bits, while Ref. \cite{CGS13} defines cheating in terms of Bob's ability to guess both bits, while also requiring that Bob can always retrieve a single bit with certainty. The choice of which definition is most appropriate will be largely application dependent. 
	
	We define $p_C := \max\{A_{OT}, B_{OT}\}$ to be the \textit{cheating probability} of the protocol. The maximum cheating probability characterises the performance of an OT protocol since protocols with $(A_{OT}=1,B_{OT}=0.5)$ are easy to construct. However, for certain applications, keeping track of cheating probabilities for both parties may be relevant. For example, it is conceivable that there are applications for which a protocol with cheating probabilities $(0.76, 0.5)$ may be better than one with $(0.75,0.75)$, and that protocols with same maximum cheating probability could be ordered with respect to the smaller cheating probability. Note also that our definition of security, while commonly used, differs from that in some other works, for example \cite{SSS09}, where security is characterised in terms of the information leakage, or in terms of Bob's ability to guess the output of some function $f(x_0,x_1)$. Nevertheless, our simpler definition makes sense if we are interested only in lower bounds on the cheating probability, since the ability to guess $(x_0,x_1)$ automatically implies the ability to guess $f(x_0,x_1)$ for any $f$.
	
	In this paper we define a variant of OT, Semi-random OT, which differs from the above 1-2 OT in that Bob does not have any inputs and randomly obtains one of Alice's bit values. More concretely, Semi-random OT is defined below.
	\begin{defn}
		1-2 quantum Semi-random OT, or simply Semi-random OT, is a protocol between two parties, Alice and Bob, such that
		\begin{itemize}[leftmargin=*]
			\item Alice chooses two input bits $(x_0, x_1)\in \{ 0, 1 \}$ or Abort.
			\item Bob outputs two bits $(c, y)$ or Abort.
			\item If Alice and Bob are honest, they never Abort, $y=x_c$, Alice has no information about $c$ and Bob has no information on $x_{c\oplus 1}$. Further, $x_0, x_1$ and $c$ are uniformly random bits \footnote{As also stated in \cite{CGS13}, it is standard to assume, in this setting, that an honest party's input bits are uniformly random, so that the corresponding cheating probabilities are computed on average.}. 
			
			\item $A_{OT} := \sup \{\text{\emph{Pr}}[\text{Alice correctly guesses $c ~\land$ Bob does}$ \\ 
			$\text{ \hspace*{1cm} not Abort}]\}$\\ \hspace*{0.8cm}  $= \frac{1}{2} +\epsilon_A$
			\item $B_{OT} := \sup \{\text{\emph{Pr}}[\text{Bob correctly guesses $(x_0,x_1)~\land$ Alice}$ \\ 
			$\text{ \hspace*{1cm} does not Abort}]\} $ \\ \hspace*{0.8cm} $= \frac{1}{2} +\epsilon_B$

		\end{itemize}
	\end{defn}
	
	The reason for introducing Semi-random OT is that we have found it simpler to work with than 1-2 OT, and the ability to perform Semi-random OT with cheating probabilities $A_{OT}$ and $B_{OT}$ implies being able to perform 1-2 quantum OT with the same cheating probabilities using additional classical communication and processing (See Appendix \ref{app:equiv}).
	Moreover, in spite of the equivalence in the above sense,
	semi-random protocols where Bob does not choose which bit he obtains can be subtly different from protocols where Bob can choose his input, in the following sense. In a semi-random protocol, such as the example protocol we give in section \ref{sec:use}, Bob obtains Alice's 1st or 2nd bit at random\footnote{A mechanism producing true randomness is a destructive quantum measurement.}. In other words, the protocol is not deterministic, even when parties honestly follow the protocol, and it generally involves a destructive quantum measurement. In order to obtain his ``honest'' output, Bob needs to irreversibly disturb the quantum state he possesses. In earlier papers~\cite{Lo97, CGS13} it is assumed, correctly for their framework, that Bob can always make a non-destructive measurement to obtain the bit of his choice. Bounds derived in this way then do not directly apply to Semi-random OT protocols, where such a measurement does not exist. Nevertheless, semi-random OT can be used to implement ``regular" OT, using classical post-processing as described in Appendix \ref{app:equiv}. There are subtle differences when considering how such post-processing affects lower and upper bounds on cheating. Here we directly obtain the same bound as in \cite{CGS13}, but by considering semi-random protocols. Our new technique also enables us to both increase the lower bound for protocols which use symmetric pure states, and to lower the upper bound by constructing a protocol with smaller cheating probabilities averaged over both parties.

	\section{Generic Protocol} \label{sec:general}

	In this section we introduce a general framework for Semi-random OT and use it to prove lower bounds on $p_C$. We present undetectable cheating strategies available to Alice and Bob and analyse them to lower bound their cheating probabilities $A_{OT}$ and $B_{OT}$ respectively. We show that for protocols within this framework, it holds that 
	\begin{equation}
	p_C = \max\{A_{OT},B_{OT}\} \geq 2/3.
	\end{equation}
	Further, if the states output to Bob by the protocol, when both parties are honest, are pure and symmetric, then
	\begin{equation}
	p_C = \max\{A_{OT},B_{OT}\} \gtrsim 0.749.
	\end{equation}
	We will prove this by bounding Alice's and Bob's cheating probabilities with respect to a single parameter, $F$, which is related to the fidelity of the output states of the protocol when it is honestly executed. (When either of the parties are dishonest, the output states may naturally be different.)
	From this we find that there is always a trade-off; as Alice's ability to cheat decreases, Bob's ability increases, and vice versa. 
	
	For this special case of pure symmetric output states, our result can be improved, giving an increased lower bound on the cheating probabilities. For protocols with pure symmetric output states, this nearly closes the gap between the known lower bounds, and the upper bounds resulting from existing protocols. We note that all 1-2 OT protocols we have seen proposed have output states that are pure and symmetric. Although there is no reason why this must be the case in general, protocols would intuitively often have this property. As we will later show, however, there exist protocols with lower average cheating probabilities than what is possible for protocols where the output states are pure and symmetric.

	\subsection{Protocol Framework}
	\label{sec:framework}
	
	We now describe the general framework for Semi-random OT protocols with $N$ rounds of communication between Alice and Bob. This framework is based on Kitaev's construction for strong coin flipping~\cite{Kitaev} and is useful for analysing the security of Semi-random OT.
	In Appendix \ref{app:equiv}, we further motivate why this framework is general for Semi-random OT.
	
	\begin{enumerate}
		\item Bob starts with the state $\rho_{BM}$ and Alice starts with an auxiliary system $A$ initialised to $\density{0}_A$. The overall state is $\rho_{BMA} := \rho_{BM}\otimes \density{0}_A$. We further suppose that Alice and Bob share the counter variable $i$, initialised to $1$, which tracks the round number of the protocol.
		\item Alice randomly selects an element $x_0x_1 \in \{ 00, 01, 11, 10 \}$.
		\item Bob sends system $M$ to Alice.
		\item Based on her choice in Step 2, Alice performs the unitary operation $U^{x_0x_1,i}_{MA} \in  \{U^{00,i}_{MA}, U^{01,i}_{MA}, U^{11,i}_{MA}, U^{10,i}_{MA}\}$. 
		\item Alice sends system $M$ back to Bob.
		\item Bob performs the unitary operation $V^{(i)}_{BM}$.
		\item The index $i$ is incremented by $1$. If $i=N+1$, the protocol proceeds to Step 8, otherwise it returns to Step 3.
		\item The final output held by Bob is 
		\beq \label{eq:output}
		\sigma^{x_0x_1}_{BM} := \Tr_A(\eta^{x_0x_1}_{BMA}),
		\eeq
		where
		
		\beq
		\eta^{x_0x_1}_{BMA} := V^{(n)}_{BM}U^{x_0x_1,n}_{MA}\dots V^{(1)}_{BM}U^{x_0x_1,1}_{MA}\circ\rho_{BMA},
		\eeq
		and we have used the convention $U\circ \rho= U\rho \  U^\dagger$.
		\item Bob performs a positive operator-valued measurement (POVM) with elements $\{ \Pi^{0*}_{BM}, \Pi^{1*}_{BM}, \Pi^{*0}_{BM}, \Pi^{*1}_{BM} \}$ to obtain the value of $c$ and $x_c$. The position of the star ``$*$" determines the value of $c$, i.e. $c=0$ for $0*$ and $1*$, while $c=1$ for $*0$ and $*1$. The value of the ``non-star" entry is the actual value of $x_c$. For example, the outcome $\Pi^{1*}_{BM}$ denotes that $c=0$ and $x_0=1$.
	\end{enumerate}
	
	The steps of the framework above describes the actions of Alice and Bob if they are honest, together with the associated outputs, assuming that all measurements are deferred to the end. Of course, Alice's and Bob's actual actions may deviate from the honest protocol description if they are dishonest, but we will see that to obtain our lower bound, this framework is useful.
	
	\subsection{Alice and Bob both honest}
	For the protocol to be correct if both Alice and Bob are honest, we require the following conditions to hold:
	\beq \label{eq:c0}
	\text{For  $c=0$: } \:\: \Tr(\Pi^{j*}_{BM} \sigma^{kl}_{BM}) = 
	\begin{cases}
		1/2,& \text{if } j=k,       \\
		0, &  \text{if } j\neq k.
	\end{cases}
	\eeq
	\beq \label{eq:c1}
	\text{For  $c=1$: } \:\: \Tr(\Pi^{*j}_{BM} \sigma^{kl}_{BM}) = 
	\begin{cases}
		1/2,& \text{if } j=l,       \\
		0, &  \text{if } j\neq l.
	\end{cases}
	\eeq
	These conditions imply that Bob receives either one of Alice's two chosen bits with equal probability, and that the bit received by Bob is correct.
	
	\subsection{Security against Bob} \label{sec:genbob}
	
	We assume that Bob acts honestly throughout the protocol, until step 9, where he deviates in the final measurement. This is clearly not the most general way of cheating for Bob, but any cheating probability that Bob can achieve by cheating in this restricted way can also be achieved by an unrestricted Bob.
	We will therefore be able to derive a lower bound on Bob's general cheating probability.
	Bob, at the beginning of step 9 (measurement), then holds either  $\sigma^{00}_{BM}$, $\sigma^{01}_{BM}$, $\sigma^{11}_{BM}$, or $\sigma^{10}_{BM}$. In order to cheat, Bob wants to guess the exact value of $x_0$ \emph{and} $x_1$. That is, he wants to know which of the four $\sigma$ states he holds. To do this, his optimal strategy would be to perform a minimum-error measurement. 
	However, the minimum-error measurement will vary according to the states chosen by any specific implementation of Semi-random OT.
	Instead, to provide a lower bound on Bob's optimal cheating probability for {\em all} protocols described by the framework, we assume that Bob performs a Square Root Measurement (SRM) \cite{HW94}. This may not be his optimal strategy, but it is a valid cheating strategy, and a strategy that  Bob can employ without even
	being caught (since Alice has no way of knowing which measurement Bob performs).  Bob's cheating probability is then at least as large as the success probability of the SRM, which is bounded as~\cite{AM14}
	\begin{equation} \label{eq:srm}
	p_{\rm succ}^{\rm SRM}\geq 1 - \frac{1}{8} \sum_{jk\neq lm} F(\sigma^{jk}_{BM}, \sigma^{lm}_{BM}),
	\end{equation}
	where $jk, lm \in \{00,01,11,10\}$ and $F$ is the fidelity, defined as
	\begin{equation}
	F(\rho, \sigma) := \text{Tr}\left(\sqrt{\rho^{1/2}\sigma\rho^{1/2}}\right).
	\end{equation}
	Eqs. \eqref{eq:c0} and \eqref{eq:c1} imply that $F(\sigma^{jk}_{BM}, \sigma^{j\oplus 1, k\oplus 1}_{BM}) = 0$ (since these states can be perfectly distinguished).  Without loss of generality, suppose that $\sigma^{00}_{BM}$ and $\sigma^{01}_{BM}$ are the pair of states with the highest fidelity. Define
	\beq
	F := F(\sigma^{00}_{BM}, \sigma^{01}_{BM}).
	\eeq
	Then it follows that
	\begin{equation} \label{eq:bot}
	B_{OT} \geq 1-F.
	\end{equation}
	This result is limited somewhat by the bound on the success probability of the SRM for general states given in Eq. \eqref{eq:srm}. Placing restrictions on the output states of the protocol allows us to tighten this bound. In particular, if $\{ \sigma^{00}_{BM}$, $\sigma^{01}_{BM}$, $\sigma^{11}_{BM}$, $\sigma^{10}_{BM}\}$ forms a symmetric set~\footnote{Symmetric sets of states are ubiquitous in quantum information. In this context ``symmetric'' means that there exists a 
	unitary $U$ such that $U^4 = \mathbbm{1}$ and $\sigma^{00}_{BM} = U\circ\sigma^{01}_{BM}=U^2\circ\sigma^{11}_{BM} = U^3\circ\sigma^{10}_{BM}$.} of pure states for which  $0\le F \le 1/2$, then as we show in Appendix \ref{app:symmBOT}, Bob's SRM measurement is successful with probability~\cite{WDA14} 
	\begin{equation} \label{eq:pure}
	\tilde p_{\rm succ}^{\rm SRM} \ge \frac{1}{4} \left( 1+\frac{1}{2}\sqrt{1-2F} + \frac{1}{2}\sqrt{1+2F}\right)^2,
	\end{equation}
	which gives the tighter bound $B^{\text{pure}}_{OT} = \tilde p_{\rm succ}^{\rm SRM} \geq\text{ the RHS of Eq.~(\ref{eq:pure})}$. (As we will see below, $F>1/2$ would mean that Alice's cheating probability is greater than 3/4.)
	
	If Bob's ability to cheat  does not depend on Alice's random choice of input, it seems likely that most protocols would output symmetric states, and this tighter bound would apply. 
	However, the example protocol we present in section \ref{sec:use}, which uses symmetric pure states, can be combined with a trivial protocol, to obtain overall average cheating probabilities which are lower than the bound for protocols using symmetric pure states. This shows that interestingly, protocols using symmetric pure states are not optimal for Semi-Random OT in general.
	
	\subsection{Security against Alice} \label{sec:genalice}
	Suppose Alice is dishonest and aims to guess the value of $c$ output to Bob. In this section we present a cheating strategy that is always available to Alice, and which is always undetectable. We derive Alice's cheating probability given that she performs this specific strategy, and use this to obtain a lower bound for Alice's achievable cheating probability given that she performs some optimal strategy, in the same way we restricted Bob's attacks to obtain a lower bound for his cheating probability.
	
	The strategy that Alice employs intuitively does the following. She chooses the two classical two-bit inputs that correspond to the pair of states among the $\sigma^{jk}_{BM}$ with the highest fidelity, which we called $F$ above. Then she performs the protocol operations corresponding to either classical input, conditioned on an ancillary qubit which is prepared in a superposition state, and which she keeps. In other words, the global state (before Bob's measurement) will be an entangled superposition, involving the pair of output states $\sigma^{jk}_{BM}$ with the highest fidelity on Bob's side. Bob then makes the measurement he makes if honest. Conditioned on his outcome, Alice's ancillary qubit is prepared in one of two states. Alice can distinguish between the two states with a success probability determined by the fidelity $F$ between the two states on Bob's side. (Her success probability is greater than 1/2, which would correspond to a random guess by Alice.) This leads us to a bound on Alice's cheating probability that involves the same quantity $F$ as our bound on Bob's cheating probability.
	
	More specifically, Alice can proceed as follows. Let $\ket{\Psi}_{BMAE}$ be a purification of $\rho_{BMA}$, where $E$ denotes the environment. Alice also prepares an additional state $|+\rangle_D=(\ket 0_D +\ket 1_D)/\sqrt 2$ for use as a control qubit to perform her strategy. Since we consider information-theoretic security, Alice can do anything allowed within quantum mechanics, including this. The overall state is
	\beq
	\frac{1}{\sqrt{2}}\left(\ket{\Psi}_{BMAE}\ket{0}_D + \ket{\Psi}_{BMAE}\ket{1}_D \right),
	\eeq
	with Alice in complete control of systems $A$, $E$ and $D$. Without loss of generality, we again assume that the two $\sigma$ states with the highest fidelity are $\sigma^{00}_{BM}$ and $\sigma^{01}_{BM}$. A valid cheating strategy available to Alice is as follows. In each Step 4 of the protocol, rather than performing a unitary $U^{x_0x_1,i}_{MA}$, Alice instead performs
	\begin{equation} \label{eq:alicestrat}
	U^{00,i}_{MA} \otimes |0\rangle\langle 0|_D + U^{01,i}_{MA} \otimes |1\rangle\langle 1|_D.
	\end{equation} 
	Defining Alice's overall operations as  $\mathcal{U} = V^{(N)}_{BM}U^{00,N}_{MA} \dots V^{(1)}_{BM}U^{00,1}_{MA}$ and $\mathcal{V} = V^{(N)}_{BM}U^{01,N}_{MA} \dots V^{(1)}_{BM}U^{01,1}_{MA}$, Alice's strategy leads to an output state
	\beq
	\begin{split}
		\ket{\chi} &:= \frac{1}{\sqrt{2}}\left(\mathcal{U}\ket{\Psi}_{BMAE}\ket{0}_D + \mathcal{V}\ket{\Psi}_{BMAE}\ket{1}_D \right) \\
		&:= \frac{1}{\sqrt{2}}\left(\ket{\psi^{00}}_{BMAE}\ket{0}_D + \ket{\psi^{01}}_{BMAE}\ket{1}_D \right).
	\end{split}
	\eeq
	This strategy is not detectable by Bob, since without access to system $D$ it is as if Alice has performed the honest operations for either $x=00$ or $x=01$, each with probability $1/2$. The states $\ket{\psi^{jk}}$ are purifications of $\sigma^{jk}_{BM}$, and all purifications are related by a unitary operation acting on the purifying system alone. Alice further performs the unitary operation 
	\beq
	W^{(1)}_{AE} \otimes \density{0}_D + W^{(2)}_{AE} \otimes \density{1}_D,
	\eeq
	where $W^{(1)}_{AE}$ and $W^{(2)}_{AE}$ are chosen to transform $\ket{\psi^{00}}$ and $\ket{\psi^{01}}$ into $\ket{\phi^{00}}$ and $\ket{\phi^{01}}$, such that the latter two states are the purifications of $\sigma^{00}_{BM}$ and $\sigma^{01}_{BM}$ with the highest overlap. 
	This operation is performed so that we can later use Uhlmann's theorem to express Alice's cheating probability in terms of $F$, as we shall see. The resulting state is
	\beq \label{eq:15}
	\ket{\Phi} := \frac{1}{\sqrt{2}}\left(\ket{\phi^{00}}_{BMAE}\ket{0}_D + \ket{\phi^{01}}_{BMAE}\ket{1}_D \right).
	\eeq
	
	In Step 8 of the protocol, Bob performs the POVM $\{ \Pi^{z}_{BM} \}_z$ on $\ket{\Phi}$, where $z\in\{0*,1*,*0,*1\}$. Our aim is to discover how well Alice can distinguish between the outcomes $c=0$ and $c=1$ using a measurement on her $D$ system. The state of system $D$ following Bob's POVM is
	\beq
	\mu_D = \frac{1}{2}\sum_{i,j,z}\bra{\phi^{0i}} \Pi^{z}_{MB} \ket{\phi^{0j}} \ket{j}\bra{i}_D,
	\eeq 
	where $i,j \in \{0,1\}$, $z \in \{ 0*, 1*, *0, *1 \}$. 
	
	Eqs. \eqref{eq:c0} and \eqref{eq:c1} can be used to evaluate terms of the form $\langle\phi^{jk} | \Pi^{z}_{BM}|\phi^{jk} \rangle$, since
	\beq \label{eq:17}
	\begin{split}
		\langle\phi^{jk} | \Pi^{z}_{BM}|\phi^{jk} \rangle &= \Tr_{BMAE}\Big(\Pi^{z}_{BM}\density{\phi^{jk}}\Big)\\
		& = \Tr_{BM}(\Pi^{z}_{BM}\sigma^{jk}_{BM}).
	\end{split}
	\eeq
	The expression for $\mu_D$ can be further simplified using the following lemma.
	\begin{lemma}
		For all values of  $z \in \{ 0*, 1*, *0, *1 \}$ and $jk \in \{ 00, 01,11,10 \}$ such that $\emph{\Tr}_{BM}(\Pi^{z}_{BM}\sigma^{jk}_{BM}) = 0$, it holds that
		\beq 
		(\Pi^{z}_{BM}\otimes \mathbbm{1}_{AE})\ket{\phi^{jk}}_{BMAE}=0.
		\eeq
	\end{lemma}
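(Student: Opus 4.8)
The plan is to show that if $\Tr_{BM}(\Pi^z_{BM}\sigma^{jk}_{BM}) = 0$, then the vector $(\Pi^z_{BM}\otimes\mathbbm{1}_{AE})\ket{\phi^{jk}}_{BMAE}$ has zero norm, hence is the zero vector. The key observation is that $\Pi^z_{BM}$ is a POVM element, so it is positive semidefinite and admits a square root $\sqrt{\Pi^z_{BM}}$ with $\sqrt{\Pi^z_{BM}}^2 = \Pi^z_{BM}$. First I would compute the squared norm of the vector in question:
\beq
\left\| (\Pi^z_{BM}\otimes\mathbbm{1}_{AE})\ket{\phi^{jk}} \right\|^2 = \bra{\phi^{jk}}(\Pi^z_{BM})^\dagger \Pi^z_{BM}\otimes\mathbbm{1}_{AE}\ket{\phi^{jk}} = \bra{\phi^{jk}}(\Pi^z_{BM})^2\otimes\mathbbm{1}_{AE}\ket{\phi^{jk}},
\eeq
using that $\Pi^z_{BM}$ is Hermitian. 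This is not yet obviously controlled by $\Tr_{BM}(\Pi^z_{BM}\sigma^{jk}_{BM})$, so the cleaner route is to instead bound the norm of $(\sqrt{\Pi^z_{BM}}\otimes\mathbbm{1}_{AE})\ket{\phi^{jk}}$ and then note that $\Pi^z_{BM} = \sqrt{\Pi^z_{BM}}\cdot\sqrt{\Pi^z_{BM}}$, so if the ``half-applied'' vector already vanishes, so does the fully-applied one.

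Concretely, the central step is the identity
\beq
\left\| (\sqrt{\Pi^z_{BM}}\otimes\mathbbm{1}_{AE})\ket{\phi^{jk}} \right\|^2 = \bra{\phi^{jk}}\Pi^z_{BM}\otimes\mathbbm{1}_{AE}\ket{\phi^{jk}} = \Tr_{BMAE}\big(\Pi^z_{BM}\otimes\mathbbm{1}_{AE}\,\density{\phi^{jk}}\big) = \Tr_{BM}(\Pi^z_{BM}\sigma^{jk}_{BM}),
\eeq
where the last equality is exactly Eq.~\eqref{eq:17}, using that $\sigma^{jk}_{BM} = \Tr_{AE}(\density{\phi^{jk}})$ since $\ket{\phi^{jk}}$ purifies $\sigma^{jk}_{BM}$. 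By hypothesis this equals $0$, so $(\sqrt{\Pi^z_{BM}}\otimes\mathbbm{1}_{AE})\ket{\phi^{jk}} = 0$. Applying $\sqrt{\Pi^z_{BM}}\otimes\mathbbm{1}_{AE}$ once more gives $(\Pi^z_{BM}\otimes\mathbbm{1}_{AE})\ket{\phi^{jk}} = (\sqrt{\Pi^z_{BM}}\otimes\mathbbm{1}_{AE})\big[(\sqrt{\Pi^z_{BM}}\otimes\mathbbm{1}_{AE})\ket{\phi^{jk}}\big] = 0$, which is the claim.

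The only genuinely delicate point is the bookkeeping around purifications: one must be careful that $\ket{\phi^{jk}}_{BMAE}$ really is a purification of $\sigma^{jk}_{BM}$ (i.e. that tracing out $AE$ recovers $\sigma^{jk}_{BM}$), which holds by construction since $\ket{\phi^{jk}}$ was obtained from $\ket{\psi^{jk}}$ — itself a purification of $\sigma^{jk}_{BM}$ — by a unitary $W$ acting only on the purifying systems $AE$. Beyond that, everything is a one-line positivity argument: a positive operator whose expectation in a state is zero annihilates that state. I expect no real obstacle; the main thing to get right is simply invoking Eq.~\eqref{eq:17} with the correct identification of the reduced state, and noting that $\Pi^z_{BM}\otimes\mathbbm{1}_{AE}$ being positive semidefinite is what licenses passing from ``zero expectation'' to ``annihilates the vector.''
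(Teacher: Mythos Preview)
Your proof is correct and follows essentially the same approach as the paper: both reduce the claim to the elementary fact that a positive semidefinite operator with zero expectation value in a state annihilates that state, invoking Eq.~\eqref{eq:17} to identify $\langle\phi^{jk}|\Pi^{z}_{BM}\otimes\mathbbm{1}_{AE}|\phi^{jk}\rangle$ with $\Tr_{BM}(\Pi^{z}_{BM}\sigma^{jk}_{BM})$. The only cosmetic difference is that you argue via the operator square root $\sqrt{\Pi^z_{BM}}$, whereas the paper writes out the spectral decomposition $\Pi^{z}_{BM}\otimes\mathbbm{1}_{AE}=\sum_n c_n\,|c_n\rangle\langle c_n|$ and concludes $\langle c_n|\phi^{jk}\rangle=0$ for all $n$.
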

	\begin{proof}
		Since $\Pi^{z}_{BM} \otimes \mathbbm{1}_{AE}$ is a positive semidefinite operator, we can write its spectral decomposition as
		\beq
		\Pi^{z}_{BM} \otimes \mathbbm{1}_{AE} = \sum_n c_n \density{c_n},
		\eeq
		where all $c_n$ are positive real numbers. Therefore, using Eq. \eqref{eq:17},
		\beq
		\begin{split}
			\Tr_{BM}(\Pi^{z}_{BM}\sigma^{jk}_{BM}) = 0 &\Rightarrow \langle \phi^{jk}| \Pi^{z}_{BM}\otimes \mathbbm{1}_{AE} |\phi^{jk}\rangle = 0 \\
			&\Rightarrow \langle c_i | \phi^{jk} \rangle = 0 \:\: \forall i,
		\end{split}
		\eeq
		and the result follows.
	\end{proof}
	Using this lemma, $\mu_{D}$ simplifies to
	\begin{equation} \label{eq:24}
	\begin{split}
	\mu_{D} &= \frac{1}{2}\Bigg[ \frac{1}{2} |0\rangle\langle 0|_D + \langle \phi^{01}|\Pi^{0*}_{MB}|\phi^{00}\rangle |0\rangle\langle 1|_D + \\
	& \quad \:\: \langle \phi^{00}|\Pi^{0*}_{MB}|\phi^{01}\rangle |1\rangle\langle 0|_D + \frac{1}{2} |1\rangle\langle 1|_D \Bigg] \\
	& + \frac{1}{2}\Bigg[\frac{1}{2} |0\rangle\langle 0|_D + \frac{1}{2} |1\rangle\langle 1|_D \Bigg] \\
	& = \frac{1}{2} \mu^{c=0}_D + \frac{1}{2} \mu^{c=1}_D,
	\end{split}
	\end{equation}
	where the first square bracket corresponds to Bob obtaining an outcome $c=0$ (i.e. $\Pi^{0*}$ or $\Pi^{1*}$) and the second square bracket corresponds to Bob obtaining an outcome $c=1$  (i.e. $\Pi^{*0}$ or $\Pi^{*1}$). Lastly, we must evaluate $\langle \phi^{01}|\Pi^{0*}_{MB}|\phi^{00}\rangle$.
	
	To satisfy no-signalling, the density matrix in system $D$ must be the same regardless of whether or not Bob actually performs his measurement~\cite{rimini, bussey, jordan, nosig1, nosig2}. If Bob performs no measurement, using Eq. \eqref{eq:15}, the state of system $D$ is
	\begin{equation} \label{eq:nomeas}
	\begin{split}
	\frac{1}{2} [ & |0\rangle\langle 0|_D + \langle\phi^{01}|\phi^{00}\rangle |0\rangle\langle 1|_D \\
	& +  \langle\phi^{00}|\phi^{01}\rangle |1\rangle\langle 0|_D +  |1\rangle\langle 1|_D ].
	\end{split}
	\end{equation}
	Comparing Eqs. \eqref{eq:24} and \eqref{eq:nomeas}, we must have $\langle \phi^{01}|\Pi^{0*}_{MB}|\phi^{00}\rangle = \langle \phi^{01}|\phi^{00}\rangle$. The trace distance between $\mu^{c=0}_D$ and $\mu^{c=1}_D$ is therefore $|\langle \phi^{01}|\phi^{00}\rangle|$, meaning that Alice can distinguish $c=0$ from $c=1$ with probability 
	\begin{equation}
	\begin{split}
	p&= \frac{1}{2}\left(1 + |\langle \phi^{01}|\phi^{00}\rangle| \right) \\
	&= \frac{1}{2}\left(1 + F(\sigma^{00}_{BM}, \sigma^{01}_{BM}) \right) \\
	&:= \frac{1}{2}\left(1 + F \right) ,
	\end{split}
	\end{equation}
	where the second equality follows from Uhlmann's theorem \cite{Uhl76} since $\ket{\phi^{00}}$ and $\ket{\phi^{01}}$ are the purifications of $\sigma^{00}_{BM}$ and $\sigma^{01}_{BM}$ with maximum overlap. It therefore holds that 
	\begin{equation}
	\label{eq:aot}
	A_{OT} \ge  \frac{1}{2}\left(1 + F \right).
	\end{equation}
	
	\subsection{Result} \label{sec:res}
	Previously, the best known lower bound for the cheating probabilities in 1-2 quantum OT was~\cite{CGS13}
	\begin{equation} \label{eq:bound1}
	\max \{ A_{OT}, B_{OT} \} \geq 2/3.
	\end{equation}
	Our results in the previous section reproduce this bound since 
	\beq 
	\begin{split}
		A_{OT} &\ge \frac{1}{2}(1+F), \:\: B_{OT} \ge 1 - F \\
		& \Rightarrow  \:\: \min_{F} \Big( \max\{A_{OT}, B_{OT} \} \Big) = \frac{2}{3}.
	\end{split}
	\eeq
	Our way to obtain this bound differs substantially from \cite{CGS13} in two ways, and this means (as we will show later) that when imposing further restrictions on the class of protocols, we can increase the lower bound.

	If we consider protocols where the output states, during an honest execution, are pure and symmetric, then we obtain a tighter lower bound (which cannot be obtained using the technique in \cite{CGS13}). Specifically, 
	we can use Eq. \eqref{eq:pure} to obtain the tighter bound
	\begin{equation}
	\min_{F} \Big( \max \{ A_{OT}, B_{OT} \} \Big) \approx 0.749.
	\end{equation}
	Protocols using symmetric states may be preferrable due to theoretical or experimental simplicity, and intuitively, one might expect optimal protocols to employ symmetric states.
	
	Finally, another important feature of our bounding method is that our construction quantifies the trade-offs possible between $A_{OT}$ and $B_{OT}$, something of importance for applications where one is more interested in a smaller value for one of the two. This exact situation arises in the context of quantum signatures \cite{AWKA16}, where, in the distribution stage, signing keys are partially distributed in a manner reminiscent of 1-2 OT. In these protocols $A_{OT}$ is prioritised, and it is important that $A_{OT}\approx 0.5$ to protect against repudiation attempts. On the other hand, to protect against forging attempts is much simpler, and the requirements on $B_{OT}$ are less strict. The parametrisation of $A_{OT}$ in terms of $F$ suggests that in order to create an imperfect 1-2 OT schemes with a small  $\epsilon_A$, it is necessary to have a protocol which, in the honest case, outputs states that are almost orthogonal. Unfortunately, given $A_{OT} \approx 0.5$, our results show that it is necessary to have $B_{OT} \approx 1.$ This mirrors a similar result for two-party computation~\cite{Buhrman}.

	\section{A protocol for oblivious transfer} 
	\label{sec:use}
	In this section we present a protocol for imperfect quantum oblivious transfer which achieves cheating probabilities of 3/4 and approximately 0.729 for sender and receiver respectively. The protocol uses unambiguous quantum state elimination.
	
	\subsection{Unambiguous Measurements}
	
	Suppose that a quantum system is prepared in one of the states $\rho^x$, 
	where $x\in \mathcal{X}$, with prior probabilities $p_x$.
	When retrieving the information stored in $\rho^{x}$ using an ``optimal" measurement, what is ``optimal" depends heavily on the application. For communication protocols, 
	a minimum-error measurement -- one which identifies the state with the smallest probability of error -- is just one possibility. For cryptographic protocols, the optimal measurement is often one which returns the largest possible amount of information while simultaneously disturbing the system less than a threshold amount.
	
	A particular class of measurements we are interested in is unambiguous measurements. These measurements give ``perfect" information in the sense that, given a successful measurement outcome, one can be certain that the decoded classical information is correct. Unambiguous measurements come in two main flavours: unambiguous state discrimination (USD), and unambiguous state elimination (USE). A successful USD measurement on $\rho^{x}$ would identify $x$ with certainty, but the measurement is generally not successful with probability $1$. When the measurement is unsuccessful it does not uniquely determine the state. 
	
	USE measurements~\cite{PBR, Caves2002, Bando2014, Wallden2014, Heinosaari1, Perry, Heinosaari2, Havlicek, Crick} on the other hand can more often be successful with probability $1$, but only guarantee that $x\notin \mathcal{Y} \subset \mathcal{X}$, i.e. the measurement rules out states rather than definitively identifying the state. Intuitively, it seems that unambiguous measurements are well suited to cryptographic applications -- their ability to provide ``perfect yet partial" information on the states being sent is often exactly what is needed. More concretely, USD can be seen as very similar to Rabin OT, in which it is desired that the receiver obtains the sender's message with probability $1/2$, and otherwise receives nothing with probability $1/2$. On the other hand, USE measurements seem closely related to the more common 1-2 OT, in which incomplete but correct information is gained with certainty. Since OT plays a central role in secure two-party computation, it seems likely that unambiguous measurements could also play a role in this developing field. 
	
	\subsection{Semi-random OT using Unambiguous State Elimination}
	In this section, we present an application of USE measurements. We describe a protocol for implementing many runs of Semi-random OT and analyse its security in the asymptotic limit. We again work in the information--theoretic security setting but this time prove \textit{upper} bounds on the cheating probabilities achievable for Alice and Bob. We show that our protocol performs better than previous protocols, and is almost optimal with respect to the bounds for symmetric pure states derived in the previous section. The protocol proceeds as follows:
	\begin{enumerate}
		
		\item Alice uniformly, randomly and independently selects $\mathcal{N}$ elements from the set $\mathcal{X} = \{ 00, 01, 11, 10 \}$. She encodes elements as $00 \rightarrow |00\rangle$, $01 \rightarrow |++\rangle$, $11 \rightarrow |11\rangle$ and $10 \rightarrow |--\rangle$, where $|\pm\rangle = (|0\rangle \pm |1\rangle)/\sqrt{2}$.
		
		\item Alice sends the $\mathcal{N}$ two-qubit states to Bob.
		
		\item Bob randomly selects $\sqrt{\mathcal{N}}$ out of the $\mathcal{N}$ states he has received and asks Alice to reveal their identity~\footnote{The choice of $\sqrt{\mathcal{N}}$ test bits is somewhat arbitrary. For
			security in the asymptotic case, we only need Bob to choose a number of test states such that the number of test states tends to infinity as $N$ increases; the fraction of states chosen for testing tends to zero as $N$ increases.}. If Alice declares $|++\rangle$ or $|--\rangle$, then Bob measures both qubits in the $X$ basis, otherwise he measures both qubits in the $Z$ basis. The protocol aborts if any measurement result does not match Alice's declaration.
		
		\item The $\sqrt{\mathcal{N}}$ states used in the previous step are discarded.
		
		\item For each of the $\mathcal{N}-\sqrt{\mathcal{N}}$ remaining states, Bob measures the first qubit in the $Z$ basis and the second qubit in the $X$ basis. These measurements consitute two USE measurements (for example, an outcome of $\ket{0}$ on the first qubit rules out $\ket{11}$). Following these measurements, Bob can with certainty rule out one element from the set $\mathcal{Y}_0 = \{ 00, 11\}$, and one from the set $\mathcal{Y}_1 = \{01, 10\}$. In this way, for each of the remaining states he can know with certainty exactly one of $x_0$ and $x_1$, but not both.
	\end{enumerate}
	
	The result of this protocol is that Alice and Bob have performed $\mathcal{N}-\sqrt{\mathcal{N}}$ runs of Semi-random OT, each of which could be used to implement a single instance of 1-2 OT, as per the construction in Appendix \ref{app:equiv}. Below we analyse the cheating probabilities achieved by each instance of Semi-random OT generated by this protocol. 
	
	At this point it is important to note that in our analysis we assume that \emph{all} $\sqrt{\mathcal{N}}$ tests have passed successfully. This is important to simplify the subsequent analysis, by restricting to ``undetectable'' strategies as we will explain later. It is worth noting, however, that in realistic scenarios, even honest parties would fail some tests due to imperfections and noise. Therefore an important further work is to weaken the condition to allow for a small fraction of tests to fail, in order to make our protocol robust. This involves bounding the trace-distance of the resulting states as a function of the (small) failure of tests, and is postponed for a future publication.
	
	Note that, from a security perspective, the protocol given above can be set in the general framework considered of the previous section by defining $U=R\otimes R$, where
	\begin{equation}
	R = |+\rangle\langle 0| - |-\rangle\langle 1|.
	\end{equation}
	Alice begins with the state $|00\rangle$ and applies either $\mathbbm{1}$, $U$, $U^2$ or $U^3$ to obtain either $|00\rangle$, $|++\rangle$, $|11\rangle$ or $|--\rangle$ respectively. The subsequent rounds simply consist of classical communication and measurements, the latter of which can be described as a unitary operation acting on a larger Hilbert space, with state collapse delayed until a protocol output is required. We show that this protocol can be made secure with $A_{OT} = 0.75$ and $B_{OT} \approx 0.729$.

	\subsection{Security against Bob}
	
	If Bob wants to cheat, then his aim is to correctly guess both $x_0$ and $x_1$ for each individual pair. In the asymptotic limit, the fraction of states discarded for testing in Step 3 tends to zero. Since the states are prepared independently, any strategy Bob performs (including general measurements correlated across all $\mathcal{N}$ states) cannot have an \textit{average} success probability (probability of correctly identifying both $x_0$ and $x_1$) which is greater than the minimum-error measurement on a single state~\footnote{If there were such a measurement, Bob could simulate this strategy when he has only a single state and beat the minimum-error measurement.}.
	Therefore, in the asymptotic limit we can bound Bob's average 
	cheating probability for each of the $\mathcal{N} -\sqrt{\mathcal{N}} \approx \mathcal{N}$ runs by considering the minimum-error measurement on a single state. 
	Since the set $S:=\{|00\rangle, |++\rangle, |11\rangle, |--\rangle \}$ forms a set of symmetric pure states, the minimum-error measurement is the SRM \cite{WDA14}. Using this measurement Bob can guess both of Alice's input bits with probability 
	\begin{equation}
	B_{OT} = \frac{1}{4}\left( 1+\frac{1}{\sqrt{2}} \right)^2 \approx 0.729.
	\end{equation}
	In this case, Bob's optimal strategy is the exact strategy considered in the general scenario in Section \ref{sec:genbob}.
	(If the tested fraction of states does not tend to zero as $\mathcal{N}\rightarrow \infty$, then Bob's optimal measurement would be a maximum confidence measurement~\cite{maxconf, nosig2}, with a success probability increasing with the fraction of tested states, reaching a maximum of 3/4 if at least 1/4 of the states are tested. Bob would then perform the relevant measurement with higher confidence in the result, and if the measurement fails, ask to ``test" the state in that position.)

	\subsection{Security against Alice}
	If Alice wants to cheat, her aim is to correctly guess the value of $c$ such that Bob received $x_c$. To do this, she may send states other than the ones in $S$. In general, Alice will generate $\rho_{AB_{11}B_{12}B_{21}B_{22}...B_{N1}B_{N2}}$ and send the $B$ systems to Bob, keeping the $A$ system for herself. In Step 3 of the protocol Bob then randomly selects a pair of the qubits he received, say $\rho_{B_{k1}B_{k2}}$, and asks Alice to declare the identity of the state. He does this for $\sqrt{\mathcal{N}}$ of the $\mathcal{N}$ pairs. Since we are looking for an upper bound on Alice's capabilities, we assume that she holds a purification $|\Psi\rangle_{B_{k1}B_{k2}A}$ of $\rho_{B_{k1}B_{k2}}$. 
	
	Alice must declare a state to Bob that will agree with his measurement outcomes in Step 3. If she can do this with certainty, then the state $|\Psi\rangle_{B_{k1}B_{k2}A}$ must be of the form
	\begin{equation} \label{eq:state}
	\begin{split}
	|\Psi\rangle_{B_{k1}B_{k2}A} &= b_{0} |00\rangle_{B_{k1}B_{k2}}|0\rangle_A + b_{1} |++\rangle_{B_{k1}B_{k2}}|1\rangle_A \\
	&+  b_{2} |11\rangle_{B_{k1}B_{k2}}|2\rangle_A + b_{3} |--\rangle_{B_{k1}B_{k2}}|3\rangle_A,
	\end{split}
	\end{equation}
	where $\{|0\rangle_A, |1\rangle_A, |2\rangle_A, |3\rangle_A \}$ is an orthonormal basis. If Alice does not send states in the above form, then she cannot guess Bob's measurement outcomes with certainty, and for asymptotically large $\mathcal{N}$ it becomes virtually certain that the protocol will abort.
	
	We note that Alice also cannot improve her average cheating probability by using strategies where she uses entanglement not just between the system she keeps and Bob's individual qubit pairs, but where she also introduces entanglement between the different qubit pairs she sends to Bob. 
	Any state for which Alice will deterministically pass a test on the qubits in position $B_{k1}B_{k2}$, can be written as
	\begin{equation} \label{eq:entstate}
	\begin{split}
	|\Psi\rangle_{B_{k1}B_{k2}A'} &= b_{0} |00\rangle_{B_{k1}B_{k2}}|0\rangle_{A'} + b_{1} |++\rangle_{B_{k1}B_{k2}}|1\rangle_{A'} \\
	&+  b_{2} |11\rangle_{B_{k1}B_{k2}}|2\rangle_{A'} + b_{3} |--\rangle_{B_{k1}B_{k2}}|3\rangle_{A'},
	\end{split}
	\end{equation}
	where $\{|0\rangle_{A'}, |1\rangle_{A'}, |2\rangle_{A'}, |3\rangle_{A'} \}$ is an orthonormal basis which may include not just a system Alice holds, but Bob's qubits in other positions than $B_{k1}B_{k2}$. This state is evidently of the form in \eqref{eq:state}. That is, if Alice is able to deterministically pass a test done on a qubit pair, then this directly limits her average cheating probability for that qubit pair, and this is true for all qubit pairs also when Alice can entangle the qubits she sends to Bob in arbitrary ways.

	Essentially, this means that Alice is restricted to the attacks considered in the general protocol analysis in Section \ref{sec:genalice} -- attacks that are superpositions of honest operations, and as such are always undetectable by Bob. In fact, it can be proven (see Appendix \ref{app:alice}) that an optimal strategy for Alice is to prepare
	\begin{equation}
	\label{eq:optcheatstate}
	\frac{1}{\sqrt{2}} \left( |00\rangle_B|0\rangle_A +|++\rangle_B|1\rangle_A\right),
	\end{equation}
	which corresponds exactly to the operation given in Eq. \eqref{eq:alicestrat}. Since the overlap between all adjacent states in $S$ is $1/2$, Eq. \eqref{eq:aot} implies that Alice can correctly guess the value of $c$ with probability $3/4$. The analysis in Appendix \ref{app:alice} confirms that this is her cheating probability.

	\subsection{A combined protocol with lower average cheating probability}
	
	One can combine our example scheme, where $A_{OT}=3/4$ and $B_{OT}=0.729$, with a ``trivial" scheme where $A_{OT}=1/2$ and $B_{OT}=1$, to achieve a scheme where both Alice's and Bob's average cheating probabilities are below 3/4. Note that this is possible because our protocol had different cheating probabilities for sender and receiver. This illustrates that the maximum of the two cheating probabilities does not fully characterise the performance of a protocol, since the smaller cheating probability can become relevant in such combined protocols. As in~\cite{CGS13}, Alice and Bob execute a weak coin flipping protocol to probabilistically choose between a protocol that is more favourable to Alice, and one that is more favourable to Bob. In~\cite{CGS13}, it is considered in some detail how to securely compose weak coin flipping and a subsequent OT protocol. 
	In the trivial OT scheme we will use, Alice simply sends Bob both bits, and Bob reads the bit he wants and discards the other, giving $A_{OT}=1/2$ and $B_{OT}=1$. If our example scheme is chosen with probability $p$ and the trivial scheme chosen with probability $1-p$, the average cheating probabilities become
	\begin{equation}
	\tilde A_{OT} = 3p/4 + (1-p)/2,~~\tilde B_{OT} = 0.729 p + (1-p).
	\end{equation}
	Choosing $p$ to set these equal results in a combined scheme where both Alice and Bob can cheat on average at most with probability $\tilde A_{OT}=\tilde B_{OT}=p_C \approx0.74$. This is the smallest cheating probability that a concrete protocol can achieve to our knowledge. Interestingly, this is lower than 0.749 both for Alice and Bob, thus proving that protocols using symmetric pure states are not optimal for semi-random oblivious transfer in terms of average cheating probability.

	\section{Experiment}
	\label{sec:exp}
	A major advantage of the above protocol is that it can be realized using standard BB84 quantum key distribution setup \footnote{Actually, Bob doesn't need a quantum memory for his test measurements. He can randomly decide to make a test measurement in the $ZZ$ or the $XX$ basis, and only afterwards ask Alice to reveal the corresponding states. Half of the test measurements will not contribute, but this will not affect the function of the protocol.}. 
	However, we have implemented the semi-random OT protocol slightly differently to enable also the realization of optimal cheating strategies. Namely, we created the Alice's entangled state with the help of optical multi-qubit quantum logic gates. But still one photon carrying a single qubit stays at Alice's side and the other photon, carrying two qubits travel to Bob's side.
	
	\begin{figure*}[t]
		\centering
		\includegraphics[width=\textwidth]{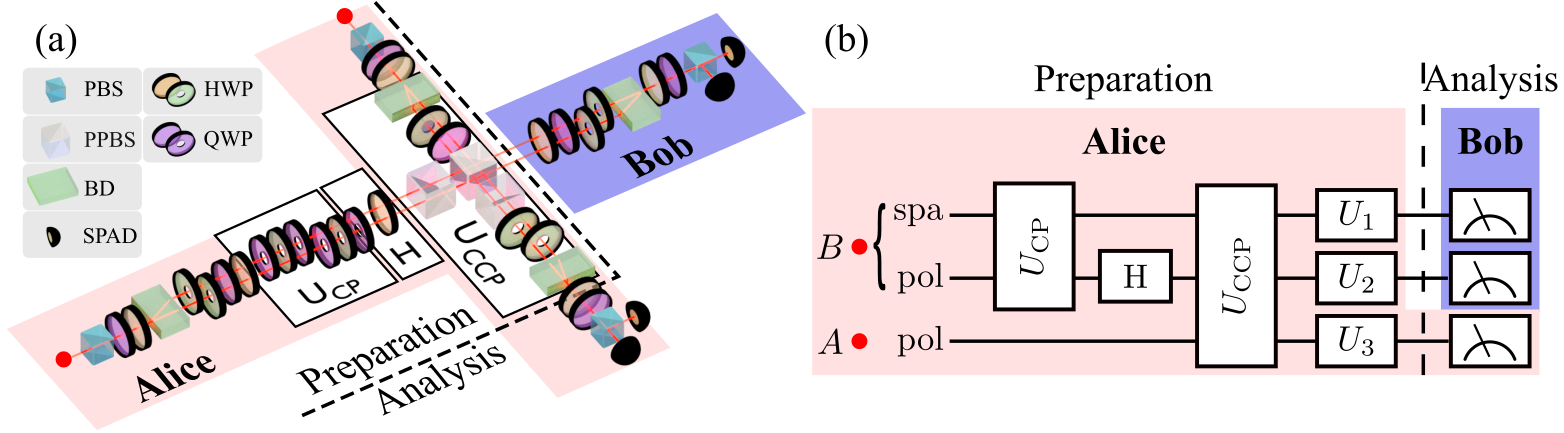}
		\caption{(a) - Experimental setup. (b) - Quantum circuit diagram of the experiment. With appropriate tuning of controlled-phase gates $U_{\mathrm{CP}}$, $U_{\mathrm{CCP}}$, and the single-qubit gates $U_{1,2,3}$, Alice prepares the required state (spa and pol denote qubits encoded into spatial and polarization modes respectively).}
		\label{fig:1}
	\end{figure*}

	\subsection{Experimental setup}
	Pairs of 810-nm time-correlated photons were generated using type-II spontaneous parametric down-conversion in a $\beta$-barium-borate crystal. The photons were guided to the experimental setup depicted in Figure~\ref{fig:1}a.
	Primarily, the state of the first of the qubits $B$ chosen by Alice was encoded by quarter- and half-wave plates (QWP, HWP) into the polarization of the signal photon. 
	Then a calcite beam displacer (BD) spatially separated horizontally and vertically polarized components into two parallel beams with a lateral distance of 4 mm. This turns the encoding of the first qubit from polarization to spatial encoding. Wave plates acting on both parallel beams were then used to encode the state of the second qubit $B$ into polarization. In this way, a single photon carried both qubits.
	
	When the basic operation of the semi-random OT was tested, as well as when Bob's cheating strategy was implemented, we utilized the idler photon (the other photon in the pair) only to herald successful generation of the signal photon.
	When Alice's cheating strategy was studied, the state of Alice's qubit $A$ was encoded into the polarization state of the idler photon.
	Linear-optical quantum logic gates, shown in Figure~\ref{fig:1}b, then entangled the input qubits to produce the required state (\ref{eq:optcheatstate}). 
	
	The two-qubit controlled-phase gate ($U_{CP}$) operates on qubits $B$ and introduces an arbitrary phase shift on state $|11\rangle$. The wave plates in the lower optical path perform the phase shift, the wave plates in the upper path only compensate for the path length difference. Another half-wave plate implements the Hadamard gate acting on the second one of qubits $B$ (encoded in the polarization degree of freedom). 
	The three-qubit controlled-controlled-phase gate ($U_{CCP}$) provides a way to entangle qubit $A$ with qubits $B$. The beam displacer separates the path of the idler photon according to its polarization into two parallel beams with 6-mm spacing. This extends the Hilbert space, providing room for manipulation. Suitable polarization operations, two-photon interference, and consecutive coincident detection then constitute the $U_{CCP}$ operation. The two-photon interference takes place in the central block of three partially-polarizing beam splitters (PPBS), the central one with reflectances $R_H = 0, R_V = 2/3$ the other two with $R_H = 2/3, R_V = 0$. This is the core of the gate operation \cite{Okamoto2005, Langford2005, Kiesel2005, Starek2016} which is explained in detail under Methods in our previous work \cite{Starek2020}. The gate is probabilistic and succeeds with theoretical probability 1/9 for phase shifts 0 and $\pi$, which are used in the experiment.
	
	Final projective measurements are realised by wave plates, polarizing beam splitters, and single-photon avalanche diodes (SPAD). This enables projection onto an arbitrary product state \footnote{We use a simplified configuration at Bob's side, but it is possible to build a four-output measurement spanning the full two-qubit space solely by linear optics.}. 
	Electric signals are processed by coincidence logic. The overall coincidence count rate was roughly 330 counts per second. The experimental integration time was 5~s for each projective-measurement setting.

	\subsection{Both parties are honest}
	To test the case when both parties are honest, we set the $U_{\mathrm{CP}}$ and $U_{\mathrm{CCP}}$ gates to zero phase shift and turned off the Hadamard operation $H$.
	
	We sequentially prepared states $|00\rangle$, $|++\rangle$, $|--\rangle$, $|11\rangle$ and measured each of them in $ZX$ basis on Bob's side. The probability of Bob correctly receiving one of Alice's bits was estimated to be 0.9943(9), where the number in the brackets represents one standard deviation at the final decimal place.
	It means that due to experimental imperfections, there is a small probability (about 0.6\%) that Bob obtains an erroneous bit value. Complete experimental data are provided in Table~\ref{tab:CorrectTransfer} of Appendix~\ref{app:expdata}.  
	
	The protocol also includes test measurements. If the parties are honest, this means that the states $|00\rangle$, $|11\rangle$ are measured in the $ZZ$ basis and states $|++\rangle$, $|--\rangle$ in the $XX$ basis. Such measurements should unambiguously discriminate between the incoming states and Bob should never abort the protocol when Alice is honest. But in an experimental implementation imperfections may cause errors. In our experiment, the average error probability was 0.013(1). All measured data are provided in Table~\ref{tab:HonestAlarms} of Appendix~\ref{app:expdata}.

	\subsection{Bob is cheating}
	Bob's optimal cheating strategy is to perform a minimum-error measurement \footnote{In this situation it is a square-root measurement, which is actually quite intuitive: the states $\ket{0}, \ket{1}, \ket{+}, \ket{-}$ form a ``cross'' on the Bloch sphere. The measurement on the first qubit is represented by two orthogonal states which lie on the diagonal. The measurement on the second qubit is different and corresponds to the other diagonal.}. 
	In our case, this means measuring the first qubit in the basis 
	$$\{|\zeta_{0}\rangle = \alpha\ket{0} + \beta\ket{1}, |\zeta_{1}\rangle = \beta\ket{0} - \alpha\ket{1} \}$$ 
	and the other in the basis 
	$$\{|\xi_{0}\rangle = \alpha\ket{0} - \beta\ket{1}, |\xi_{1}\rangle = \beta\ket{0} + \alpha\ket{1} \}$$
	with $\alpha = \cos(\pi/8)$ and $\beta = \sin(\pi/8)$. Each combination of detector clicks gives Bob a guess of both the Alice's bits. The average experimental value of cheating probability, i.e., the probability of a correct guess of both bits, was $0.718(5)$, which is close to the theoretical value of $0.729$. Recorded counts are provided in Table~\ref{tab:BobCheating} of Appendix~\ref{app:expdata}. 
	
	\subsection{Alice is cheating}
	To test Alice's optimal cheating strategy, we set the phase shifts of the gates $U_{\mathrm{CP}}$ and $U_{\mathrm{CCP}}$ to $-138.2^{\circ}$ and $180^{\circ}$, respectively. We prepared the input qubits in a suitable product state and adjusted the output single-qubit operations $U_{1,2,3}$ to achieve the desired entangled state (\ref{eq:optcheatstate}). The specific choice of input states, gate parameters, and unitary operations is a result of numerical optimization, which is discussed in Appendix \ref{app:cheatstateprep}.
	
	In order to verify the prepared entangled state, we performed quantum state tomography \cite{Jezek2003}. The purity of the state was $P = 0.884$ and its fidelity with respect to the ideal state (\ref{eq:optcheatstate}) was $F = 0.921$. The cause of imperfect purity and fidelity is the sensitivity of the $U_{\mathrm{CCP}}$ gate to interferometric phase instability and spatio-temporal misalignment of the photons. Imperfect wave-plate retardances reduce the quality of the state even further.
	
	To learn which bit was obtained by Bob, Alice measures her qubit $A$ in the state (\ref{eq:optcheatstate}) in the $X$ basis. Honest Bob makes his measurements according to the protocol. As described above, Bob's outcomes $\ket{0+}_B, \ket{1-}_B$ correspond to $c=0$ and $\ket{1+}_B, \ket{0-}_B$ correspond to $c=1$. If Alice obtains $\ket{+}_A$ ($\ket{-}_A$), then she guesses that $c=0$ ($c=1)$. Alice's measurements in the $X$ basis and Bob's measurements in the $ZX$ basis were already contained in the data from the three-qubit state tomography. We estimated the cheating probability as the number of detection events in which Bob and Alice obtain the same value of $c$, divided by the number of all detection events. Alice correctly estimated Bob's bit $c$ with probability $0.77(1)$. The measured count rates are in Table~\ref{tab:AliceCheating1} in Appendix~\ref{app:expdata}. 
	
	In the case of test measurements, Bob measures in the $ZZ$ or the $XX$ basis and Alice in the $Z$ basis. These data were also obtainable from the tomographic measurement. In theory, Bob should not be able to detect this type of cheating strategy by Alice. But in the experiment, there was a small fraction of outcomes telling Bob to abort the protocol, on average 0.059(6). This fraction was calculated as the number of counts in which Bob's measurement outcome did not match Alice's declaration divided by the total number of counts. The relevant data are presented in Table~\ref{tab:AliceCheating2} of Appendix~\ref{app:expdata}. 
	
	In our experiment, Alice's probability of making a correct guess, $0.77$, was higher than the theoretical limit $0.75$. But there was also a relatively high probability of Bob discovering her cheating (0.059, which is higher than the probability of ``false alarm'', 0.013, if Alice was honest). These effects are likely caused by imperfect preparation of the state (\ref{eq:optcheatstate}).

	\section{Discussion}

	In this paper we introduced Semi-random oblivious transfer (OT) and a general framework useful for its study.
	We explicitly constructed undetectable cheating strategies available to Alice and Bob and used them to lower-bound the cheating probability for any Semi-random OT protocol within our framework. The derived bounds are directly transferable to standard 1-2 quantum OT, allowing us to obtain the lower bound $p_C \geq 2/3$, but using different assumptions on cheating strategies than assuming semi-honest adversaries as done by  Chailloux {\em et al.}~\cite{CGS13}. 
	Our technique, other than re-deriving the previous bound, allows us to (i) quantify the trade-off between cheating probabilities for different parties, which can be useful for applications where limiting cheating by one party is prioritised and (ii) obtain tighter bounds if we impose further restrictions. In particular, if the states used by honest parties are pure and symmetric, we obtain the bound $p_C \geq 0.749$, which was not obtained previously.

	Our construction provides a simple quantitative relationship between Alice's and Bob's ability to cheat, and gives new bounds in biased settings. In applications more sensitive to sender dishonesty than receiver dishonesty (or vice versa), our parametrisation of $A_{OT}$ and $B_{OT}$ in terms of the fidelity shows explicitly how reductions in one party's ability to cheat will impact the other's cheating probability. 
	To illustrate our construction we presented an OT protocol using
	unambiguous state elimination measurements to achieve cheating probabilities $A_{OT}=3/4$, $B_{OT} \approx 0.729$ and therefore $p_C = 3/4$, together with its experimental realisation. The cheating probabilities compare favourably with the previously best known protocol given in Ref. \cite{CKS10} in which $A_{OT}=B_{OT}=3/4$. Unlike for the qutrit protocol proposed in~\cite{CKS10}, in our example protocol, the bound on Alice's cheating probability concerns her average cheating probability. On the other hand, Bob's cheating probability is lower (0.729 against 0.75 in~\cite{CKS10}), and above all, our protocol does not require entanglement and can be realised using the same experimental components as BB84 quantum key distribution. A minor modification could render our protocol even more practical. Bob could, before asking Alice to reveal any states, randomly select some qubit pairs and measure them in the same basis, either the $X$ or the $Z$ basis. He then asks Alice to receive these states, but only after he has measured these qubit pairs. If Alice's declaration does not match his measurement results, he again aborts. Bob's test is then only useful if his selected basis matches the basis states used by Alice.
	Another variation would be for Bob to randomly select which qubit he measured in the $X$ and which in the $Z$ basis. This makes no difference if Alice is limited to using undetectable cheating strategies, but would lead to somewhat improved performance when loss and imperfections are present and in finite-size scenarios, where Alice may choose to employ a cheating strategy that could be detected by Bob with some probability.
	
	Since our example protocol outputs symmetric pure states, the cheating probabilities achieved are almost tight with the bounds proven in this paper for this class of protocols. 
	Combining the example protocol with a trivial protocol, however, an average cheating probability $p_C\approx 0.74$ for both Alice and Bob is possible. It follows that protocols with pure and symmetric output states are not optimal. There thus remains a gap between the known lower bounds on cheating probabilities for quantum oblivious transfer, and what the lowest achievable cheating probabilities are.
	
	We further note that if two protocols are combined using weak coin flipping, then the parties know which protocol actually got implemented. The bound on cheating probabilities in such combined protocols are therefore also only bounds on average cheating probabilities. For an individual round, the parties are aware that they have higher or lower cheating probabilities.
	Related to this, cheating probabilities do not fully capture how certain a cheating party can be that the extra information they have dishonestly obtained is correct. In our example protocol, Bob can never be certain that his dishonestly obtained information is correct. He only ever knows that his guess is correct with probability 0.729. Alice, however, can be certain of Bob's bit choice with probability 1/4, and she knows when this occurs. The rest of the time her guess is right with probability 2/3. This is a further advantage of our protocol, compared with the one in~\cite{CKS10}. To elaborate, if one probabilistically chooses between a trivial protocol where Alice can cheat perfectly and Bob cannot cheat at all ($A_{OT} = 1$ and $B_{OT} = 1/2$) and a trivial protocol where Alice cannot cheat at all and Bob can cheat perfectly ($A_{OT} = 1/2$ and $B_{OT} = 1$), then the average cheating probabilities for either party are 3/4, but with probability 1/2, either party knows for sure that they can cheat perfectly. When executing the protocol in~\cite{CKS10}, Alice similarly knows for sure what Bob's bit choice was half the time, and the rest of the time she randomly guesses. In our protocol, Alice is only sure with probability 1/4. Bob, however, cheats with a minimum-error measurement both in our protocol and the one in~\cite{CKS10}, and is never sure that his guess is correct. Since the states Bob receives in both protocols are linearly dependent, he can never unambiguouosly determine both of Alice's bit values. 
	We also presented an optical realisation of our protocol. The achieved experimental performance parameters agree well with the theoretical values, showing that the protocol is feasible.
	
	As a final point we note that in quantum cryptography, it is often easier to analyse so-called i.i.d. (individual identically distributed) cheating strategies, where dishonest parties are restricted to act individually on each quantum system transmitted (or to act individually on other relevant "units" in the protocol), and where they act in the same way for each transmitted quantum system. If the parties can use cheating strategies that operate jointly on several transmitted quantum systems, sometimes called "coherent" cheating strategies, then cheating probabilities might increase. It is therefore worth emphasizing that the results we have obtained are in fact valid for general cheating strategies, not just i.i.d. cheating strategies. First, note that the bounds we have derived are lower bounds for cheating probabilities, and therefore immediately are valid for all cheating strategies, including joint or coherent cheating strategies by either party. Second, in the example protocol, we did not need to restrict either Alice or Bob to i.i.d. cheating strategies. As for Alice, in connection with Eq. (\ref{eq:entstate}), we explained why she does not benefit from entanglement with other positions. That is, we are allowing her joint cheating strategies, and show that this does not increase her ability to predict Bob's output for each instance of OT. However, it should be pointed out that this results from the fact that we make the simplifying assumption that Alice needs to pass Bob's tests with unit probability. If this assumption is not made, then the analysis of whether joint or coherent strategies can help Alice cheat is less straightforward. If Alice is allowed to fail Bob's tests with some probability, then she can use a state that slightly deviates from the state in Eqs. (\ref{eq:state}) and (\ref{eq:entstate}), and a more careful analysis of i.i.d. versus joint or coherent cheating strategies for Alice would be required. Bob, on the other hand, needs to maximise his average probability to correctly guess both of Alice's bits. His optimal cheating probability is obtained by individual minimum-error measurements on each qubit pair. Joint measurements on more than one qubit pair do not help him, and there is no need to restrict Bob to i.i.d. cheating strategies in the finite-size scenario either.
	
	\acknowledgements{The authors would like to thank J. Sikora and I. Kerenidis for helpful discussions. This work was supported by the UK Engineering and Physical Sciences Research Council (EPSRC) under 		EP/T001011/1, EP/T001062/1 and EP/M013472/1.
		R.A. gratefully acknowledges EPSRC studentship funding under grant number EP/I007002/1.
		R.S., M.M., L.M., and M.D. acknowledge support by
		Palack\'{y} University under grant number IGA-PrF-2020-009.}

	\bibliographystyle{apsrev4-1}

	\appendix
	\section{Equivalence between Semi-random OT, OT, and Random OT}
	\label{app:equiv}
	Here we prove the following claim (stated below) contained in the main paper.
	
	\begin{proposition}
		The existence of a Semi-random OT protocol with cheating probabilities $A_{OT}$ and $B_{OT}$ is equivalent to the existence of a 1-2 quantum OT protocol with the same cheating probabilities.
	\end{proposition}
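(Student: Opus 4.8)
The plan is to prove the equivalence by exhibiting two explicit reductions, one in each direction, each using only a constant amount of extra classical communication and local processing, and then to verify that every cheating probability is preserved \emph{exactly} (not merely up to a constant or in one direction). First I would set up the reduction from Semi-random OT to 1-2 OT. Given a Semi-random OT protocol, to implement 1-2 OT with Alice's inputs $(X_0,X_1)$ and Bob's input $b$: run the Semi-random OT, after which Alice holds uniformly random $(x_0,x_1)$ and Bob holds $(c,x_c)$ (or one party aborts); Bob then sends $t:=b\oplus c$ to Alice, and Alice replies with $s_j:=X_j\oplus x_{j\oplus t}$ for $j\in\{0,1\}$. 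Since $c=b\oplus t$, Bob holds $x_{b\oplus t}$ and recovers $X_b=s_b\oplus x_{b\oplus t}$, while recovering $X_{b\oplus 1}$ would require $x_{c\oplus 1}$, i.e. Alice's other Semi-random bit. Honest correctness of the Semi-random OT, together with the fact that the added rounds carry no abort conditions, gives the correctness and non-abort clauses of Definition~\ref{def:OT}.

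For the converse, given a 1-2 OT protocol, Alice chooses $(x_0,x_1)$ uniformly at random and Bob chooses $c\in\{0,1\}$ uniformly at random and runs the 1-2 OT with input $b=c$, outputting $(c,x_c)$; honestly $y=x_c$, $c$ is uniform and independent of $(x_0,x_1)$, and the no-information clauses follow immediately from those of the 1-2 OT. This direction is essentially bookkeeping: a dishonest Alice's guess of $c$ is precisely a guess of $b$, a dishonest Bob's guess of $(x_0,x_1)$ is precisely a guess of the underlying 1-2 OT inputs, and the abort events are literally the same events, so both $A_{OT}$ and $B_{OT}$ are unchanged.

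The substantive step is checking preservation in the first reduction. A dishonest Alice learns $b$ if and only if she learns $c$, because $b=t\oplus c$ with $t$ sent in the clear; $t$ is her only additional information, and for an honest Bob $t$ is independent of $b$ (equivalently, reveals nothing about $c$), since Semi-random honesty gives Alice no information about $c$ and $c$ is uniform. Hence any strategy for guessing $b$ in the 1-2 OT yields a strategy for guessing $c$ in the Semi-random OT with identical success probability, and conversely, so $A_{OT}$ is preserved; moreover the event ``Bob does not Abort'' is the same in both protocols. Likewise, since $j\mapsto j\oplus t$ is a bijection, a dishonest Bob learns $(X_0,X_1)$ if and only if he learns both $x_{0\oplus t}$ and $x_{1\oplus t}$, i.e. both of $(x_0,x_1)$ (the pads $s_0,s_1$ being information-theoretically hiding otherwise), so $B_{OT}$ is preserved with the abort events again identical. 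Combining the two reductions with matching cheating probabilities gives the claimed equivalence.

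The main obstacle I expect is establishing \emph{equality} rather than one-sided bounds, i.e. confirming that the classical wrapper neither helps nor hinders a cheater; the delicate case is security against a dishonest Alice in the first reduction, where one must argue that nothing Alice does after seeing $t$ --- in particular the freedom in choosing her reply $(s_0,s_1)$, and the possibility that she already deviated in the Semi-random sub-protocol --- can push her probability of guessing $b$ above the Semi-random $A_{OT}$. This reduces to the two observations noted above: $t$ is independent of $c$ for honest Bob, and Alice's reply does not affect her own final guess, so her optimal attack is exactly an optimal Semi-random attack on $c$ followed by the deterministic relabelling $b=t\oplus c$.
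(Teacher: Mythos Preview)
Your proposal is correct and uses essentially the same one-time-pad/XOR reduction as the paper. The only structural difference is that the paper routes the equivalence through an intermediate primitive, Random OT (ROT), citing \cite{CKS10} for the ROT $\leftrightarrow$ 1-2 OT step and then proving only Semi-random OT $\leftrightarrow$ ROT; your direct Semi-random OT $\leftrightarrow$ 1-2 OT reductions are precisely the composition of those two steps, and your security analysis (in particular the argument that $t=b\oplus c$ gives a dishonest Alice no extra power because honest Bob's $c$ is uniform and hidden) is more explicit than what the paper writes out.
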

	
	To prove this, we begin by giving the definition of a related OT variant called Random OT (ROT), as follows.
	
	\begin{defn} {\em Random OT is a protocol between two parties, Alice and Bob, such that}
		
		\begin{itemize}[leftmargin=*]
			
			\item Alice outputs two bits $(x_0 , x_1) \in \{0, 1\}$ or Abort.
			
			\item Bob outputs two bits $(c,y)$ or Abort.
			
			\item If Alice and Bob are honest, they never Abort, $y = x_c$, Alice has no information about $c$ and Bob has no information about $x_{c\oplus 1}$. Further, $x_0,x_1$ and $c$ are uniformly random bits.
			
			\item $A_{OT} := \sup \{ \text{\emph{Pr}}[\text{Alice correctly guesses~} c~\land \text{Bob does }$ \\ \hspace*{1.1cm}   $\text{not Abort}]\}$\\ \hspace*{0.8cm}  $= \frac{1}{2} +\epsilon_A$
			\item $B_{OT} := \sup \{\text{\emph{Pr}}[\text{Bob correctly guesses $(x_0,x_1)~\land$ Alice}$ \\ $\text{ \hspace*{1cm} does not Abort}]\} $ \\ \hspace*{0.8cm} $= \frac{1}{2} +\epsilon_B$
			
		\end{itemize}
	\end{defn}
	
	Ref.~\cite{CKS10} proved that the existence of a  ROT protocol with cheating probabilities $A_{OT}$ and $B_{OT}$ is equivalent to the existence of a 1-2 OT with the same cheating probabilities.
	Following very similar arguments, in the following subsections we will show that the existence of a Semi-random OT protocol with cheating probabilities $A_{OT}$ and $B_{OT}$ is equivalent to the existence of a ROT with the same cheating probabilities. This, combined with the results in Ref.~\cite{CKS10}, proves the proposition.
	
	\subsection{Semi-random OT from ROT}
	\label{sec:SRfromROT}
	
	Let $P$ be a ROT protocol with cheating probabilities $A_{OT}(P)$ and $B_{OT}(P)$. We construct a Semi-random OT protocol $Q$ with the same cheating probabilities as follows:
	\begin{enumerate}
		\item Alice has inputs ($z_0, z_1)$.
		\item  Alice and Bob run protocol $P$ to output $(x_0, x_1)$ for Alice and $(c, y)$ for Bob.
		\item  Alice and Bob abort in $Q$ if and only if they abort in $P$. Otherwise, Alice sends $(z_0 \oplus x_0, z_1 \oplus x_1)$ to Bob.
		\item Bob outputs $(c, y')$ where $y' = (z_c \oplus x_c \oplus y)$.
	\end{enumerate}
	We now show that $Q$ is a Semi-random OT protocol with cheating probabilities $A_{OT}(P)$ and $B_{OT}(P)$.
	
	If Alice and Bob are honest, then by definition we have $ y=x_c$ and so $y' =z_c$. Alice has no information about $c$ and Bob has no information about $z_{c\oplus 1}$, as required.
	
	If Alice is dishonest, she cannot guess $c$ except with probability $A_{OT}(P)$ since she only receives communications from Bob via protocol $P$. Therefore $A_{OT}(Q)=A_{OT}(P)$.
	
	If Bob is dishonest, he holds $(z_0 \oplus x_0, z_1 \oplus x_1)$ and aims to guess $(z_0,z_1)$. This is equivalent to Bob guessing $(x_0, x_1)$ which he can do with probability $B_{OT}(P)$ Therefore $B_{OT}(Q)=B_{OT}(P)$.

	\subsection{ROT from Semi-random OT}
	
	Let $P$ be a Semi-random OT protocol with cheating probabilities $A_{OT}(P)$ and $B_{OT}(P)$. We construct a ROT protocol $Q$ with the same cheating probabilities as follows:
	\begin{enumerate}
		\item Alice picks $x_0 , x_1 \in \{0, 1\}$ uniformly at random.
		\item Alice and Bob perform the Semi-random OT protocol $P$ where Alice inputs $x_0, x_1$. Let $(c, y)$ be Bob's outputs.
		\item Alice and Bob abort in $Q$ if and only if they abort in $P$. Otherwise, the outputs of protocol $Q$ are $(x_0, x_1)$ for Alice and $(c, y)$ for Bob.
	\end{enumerate}
	
	The outputs of $Q$ are uniformly random bits (if both parties are honest) since Alice chooses her input at random. Note that, in the definition of ROT, the outputs are only required to be random in the honest case, and no assertions are made when one party acts dishonestly. Therefore $Q$ does indeed implement ROT. From the construction of $Q$ it is also clear that $A_{OT}(P)=A_{OT}(Q)$ and $B_{OT}(Q)=B_{OT}(P)$.
	
	\subsection{Semi-random OT from ROT in the general protocol framework}
	
	In order to fully motivate why the protocol framework in section \ref{sec:framework} is general for Semi-random OT, we here sketch how to recast Semi-random OT, realized by performing ROT together with the classical processing as detailed above in \ref{sec:SRfromROT}, in the form of our general framework. ROT with classical processing is not immediately in the form of the general protocol framework for Semi-random OT, since in a quantum protocol for ROT, Alice has outputs which she would obtain through a measurement. In the general protocol framework in \ref{sec:framework}, however, Alice makes no measurements. We also show that the cheating probabilities do not change when the protocol is recast.
	
	Suppose therefore that Alice obtains her two output bits in ROT by measuring a part of a quantum system held by her at some point during the protocol. (If desired, this measurement may be deferred to the end of the protocol, using the standard technique for this, closely related to the procedure we will describe below.)  Any POVM may be realized as a projective measurement in a suitably enlarged Hilbert space~\cite{neumark}, with as many dimensions as outcomes. We will label this Hilbert space $C$. Suppose therefore that in this possibly enlarged Hilbert space, Alice's four-outcome measurement has measurement operators $\Pi^{x_0,x_1}_C=|x_0,x_1\rangle_{CC}\langle x_0, x_1|$, which are orthonormal projectors on four orthogonal basis states $|x_0,x_1\rangle_C$ for $x_0, x_1 \in \{0, 1\}$. (The construction below can easily be extended to the case where Alice's four measurement operators are orthogonal projectors onto more than one basis state, that is, have rank $> 1$).
	
	Now, instead of measuring system $C$ to obtain $(x_0, x_1)$ and sending $(z_0 \oplus x_0, z_1 \oplus x_1)$ to Bob, where $(z_0, z_1)$ are Alice's inputs, Alice performs one of the four unitary transforms
	\begin{eqnarray}
	U^{z_0, z_1}_{CD} = \sum_{x_0, x_1 \in \{0,1\}} &&|x_0,x_1\rangle_{CC}\langle x_0, x_1|\\
	&&\otimes |z_0 \oplus x_0, z_1 \oplus x_1\rangle_{DD} \langle aux|\nonumber
	\end{eqnarray}
	on system $C$ and an auxiliary system $D$, where $|aux\rangle_D$ is a ``blank" state that could e.g. be chosen as $|0, 0\rangle$. The states $|0,0\rangle_D, |0,1\rangle_D, |1,0\rangle_D, |1,1\rangle_D$ form an orthonormal basis for the four-dimensional $D$ system. She then sends system $D$ to Bob, who (if he is honest) can measure this system to obtain $(z_0 \oplus x_0, z_1 \oplus x_1)$.
	
	This modified protocol for Semi-random OT is now in the form of the general framework. (If desired, Bob's measurements to obtain $(z_0 \oplus x_0, z_1 \oplus x_1)$ and $(c, y)$ can be combined into a single measurement by Bob that directly gives $(c, y')$.) By no-signalling~\cite{rimini, bussey, jordan, nosig1, nosig2}, Bob cannot tell whether or not Alice has measured system $C$. Therefore, Bob's cheating probability remains the same as if an honest Alice simply had measured system $C$ and sent him the state $|z_0 \oplus x_0, z_1 \oplus x_1\rangle$. Equivalently, Bob's cheating probability is the same as if Alice had measured system $C$ and sent him the classical bits $(z_0 \oplus x_0, z_1 \oplus x_1)$. Since the recast Semi-random OT protocol is otherwise the same as the ROT protocol we started with, in particular, how Bob obtains $(c,y)$ remains the same, Alice's cheating probabilities are also equal in both versions of the Semi-random protocol. That is, cheating probabilities remain the same in the version that is in the form of the general framework, and in the version where Alice and Bob perform ROT with classical processing.\\

\section{Bob's cheating probability for symmetric sets of states}
\label{app:symmBOT}
We need to obtain Bob's cheating probability for a symmetric set of four equiprobable pure states
$\sigma^{ij}_{BM}=|\psi^{ij}\rangle\langle\psi^{ij}|$, where ``symmetric'' means that there exists a unitary transform $U$ such that $U^4 = \mathbbm{1}$, and successive applications of $U$ to a ``starting state" will result in the other states in the set. It could either hold that $|\psi^{01}\rangle= U|\psi^{00}\rangle, |\psi\rangle^{11}=U^2|\psi^{00}\rangle,  |\psi^{10}\rangle= U^3|\psi^{00}\rangle$, which we will refer to as ``Case 1", or, that $|\psi^{11}\rangle= U|\psi^{00}\rangle, |\psi\rangle^{01}=U^2|\psi^{00}\rangle,  |\psi^{10}\rangle= U^3|\psi^{00}\rangle$, which we will refer to as ``Case 2". All other orderings will be equivalent to these two cases. Case 1 will result in a lower cheating probability for Bob for a given largest pairwise fidelity $F$ between two of the four states. That is, Case 1 will give 1-out-of-2 OT protocols with better performance.

In either case, Bob's optimal measurement is the minimum-error measurement for distinguishing between these four states. For a set of symmetric equiprobable states the optimal minimum-error measurement is the so-called square-root measurement. Its success probability for pure symmetric states can be obtained in terms of the sum of the square roots of the Gram matrix for the states~\cite{WDA14}. The elements of the Gram matrix for a set of states $\{|\psi_j\rangle\}$ are given by $G_{ij} = \langle{\psi_i}|{\psi_j}\rangle$.
For four symmetric pure states, the Gram matrix is given by
\begin{equation}
\label{eq:generalgram}
G=\left(\begin{array}{cccc}1 & f & G & f^* \\f^* & 1 & f & G \\G & f^* & 1 & f \\f & G & f^* & 1\end{array}\right),
\end{equation}
where $f$ is generally complex but $G$ is always real. In Case 1, it holds that $f=\langle\psi^{00}|\psi^{01}\rangle=\langle\psi^{01}|\psi^{11}\rangle = \langle\psi^{11}|\psi^{10}\rangle = \langle\psi^{10}|\psi^{00}\rangle$, and $G=\langle\psi^{00}|\psi^{11}\rangle = \langle\psi^{01}|\psi^{10}\rangle$. For sets of states that allow us to implement 1-out-of-2 oblivious transfer, it will in Case 1 also hold that $G=0$. As already mentioned, this follows from conditions \eqref{eq:c0} and \eqref{eq:c1}. In Case 1 it also then holds that the largest pairwise fidelity between two of the states $F=|f|$. In Case 2, it will instead hold that $f=0$ and $G$ is nonzero, with $|G|$ equal to the largest pairwise fidelity $F$.

The eigenvalues of the Gram matrix are equal to
\begin{eqnarray}
\lambda_0 &=& 1+f+G+f^*,~~ \lambda_1 = 1+if-G-if^*,\nonumber\\
 \lambda_2 &=& 1-f+G-f^*,~~ \lambda_3 = 1-if-G+if^*.
\end{eqnarray}
These eigenvalues are all real, and can also be shown to always be nonnegative.
The success probability for the square-root measurement, and hence Bob's cheating probability, is given by~\cite{WDA14}
\begin{eqnarray}
\label{eq:symbobcheat}
B_{OT} &=& \frac{1}{16}\left(\sqrt{\lambda_0}+\sqrt{\lambda_1}+\sqrt{\lambda_2}+\sqrt{\lambda_3} \right)^2\\
&=&\frac{1}{16}\left(\sqrt{1+G + 2 \text{Re} f}+\sqrt{1+G - 2 \text{Re} f} \right.\nonumber\\
&&+\left.\sqrt{1-G + 2 \text{Im} f}+\sqrt{1-G - 2 \text{Im} f} \right)^2. \nonumber
\end{eqnarray}
(Since the eigenvalues of the Gram matrix are nonnegative, the arguments of each of the square roots are non-negative.)

In Case 1, where $G=0$, Bob's optimal cheating probability becomes
\begin{eqnarray}
\label{eq:symbobcheatGzero}
B_{OT} &=&\frac{1}{16}\left(\sqrt{1 + 2 \text{Re} f}+\sqrt{1 - 2 \text{Re} f} \right.\nonumber\\
&&+\left.\sqrt{1+ 2 \text{Im} f}+\sqrt{1 - 2 \text{Im} f} \right)^2. \nonumber
\end{eqnarray}
Since Alice can always cheat at least with probability $A_{OT}\ge (1+F)/2$, the interesting range is $F=|f|\le 1/2$.
It is relatively easy to show that the expression in the RHS is then minimised when $f$ is pure real or pure imaginary.To show this, one can e.g. set $f=F\cos\theta + i F\sin\theta$, and differentiate with respect to $\theta$. For fixed $F\le 1/2$, $B_{OT}$ reaches its maximum value when $\theta = k\pi/2$ and its minima when $\theta=\pi/4+k\pi/2$, where $k$ is an integer.  (In the general case, where $|f|$ can be larger than $1/2$, we should make either $\text{Re} f$ or $\text{Im} f$ as large as possible, without the arguments of any of the square roots being negative, in order to minimise $B_{OT}$.) That is, in case 1 the smallest possible cheating probability for Bob when $F\le 1/2$ is equal to
\begin{eqnarray}
\label{eq:symmsmallestBOT}
B_{OT} =\frac{1}{4} \left(1+\frac{1}{2}\sqrt{1 + 2F}+\frac{1}{2}\sqrt{1 - 2F} \right)^2
\end{eqnarray}
as a function of the largest pairwise fidelity $F$ between the states.

In Case 2, Bob's optimal cheating probability is instead given by
\begin{eqnarray}
B_{OT} =\frac{1}{4} \left(\sqrt{1 + G}+\sqrt{1 - G} \right)^2,
\end{eqnarray}
where now the largest pairwise fidelity $F=|G|$. For a given $F$, this cheating probability for Bob is always larger than the one in Eq. \eqref{eq:symmsmallestBOT}.
To summarise, for a given largest pairwise fidelity $F\le 1/2$, Bob's cheating probability $B_{OT}$, for a set of equiprobable pure symmetric states, is at least as large as the cheating probability given in Eq. \eqref{eq:symmsmallestBOT}.

	\section{Alice's optimal cheating strategy in the example protocol}
	\label{app:alice}
	
	Alice, to pass a test by Bob with certainty, has to send a state of the form
	\begin{eqnarray}
	\label{eq:cheatstate}
	\ket{\psi_{\rm ch}} &=& a\ket{0}_A\otimes\ket{00}_B + b\ket{1}_A\otimes\ket{++}_B\nonumber\\
	&& + c\ket{2}_A\otimes\ket{11}_B + d\ket{3}_A\otimes\ket{- -}_B,
	\end{eqnarray}
	where $\{\ket 1_A, \ket 2_A, \ket 3_A, \ket 4_A\}$ is an orthonormal basis for a system $A$ she retains while sending Bob system $B$, and $|a|^2+|b|^2+|c|^2+|d|^2=1$.
	
	Bob measures the first $B$ qubit in the $Z$ basis and the second $B$ qubit in the $X$ basis. It holds that
	\begin{eqnarray}
	\braket{0+|\psi_{\rm ch}} &=& \frac{1}{\sqrt 2}\left(a\ket 0_A+b \ket 1_A\right)\nonumber\\
	\braket{1+|\psi_{\rm ch}} &=& \frac{1}{\sqrt 2}\left(b\ket 1_A+c \ket 2_A\right)\nonumber\\
	\braket{0-|\psi_{\rm ch}} &=& \frac{1}{\sqrt 2}\left(a\ket 0_A+d \ket 3_A\right)\nonumber\\
	\braket{1-|\psi_{\rm ch}} &=& \frac{-1}{\sqrt 2}\left(c\ket 2_A+d \ket 3_A\right).
	\label{eq:condstates}
	\end{eqnarray}
	These states are the unnormalised states conditionally prepared on Alice's side, given Bob's measurement outcome. The norm of each of the above states gives the probability for that outcome on Bob's side. That is, it is the probability with which the corresponding state is prepared.
	
	To successfully cheat, Alice needs to determine whether Bob received the first or second bit. Bob obtains the first bit if he obtains $(0,+)$ or $(1,-)$, and the second bit if he obtains $(0,-)$ or $(1,+)$. It so happens that each of these outcome combinations occur with probability 1/2, irrespective of $a, b, c, d$. The two density matrices Alice needs to distinguish between are $\rho_0$ and $\rho_1$, with
	\begin{eqnarray}
	\frac{1}{2}\rho_0 = \braket{0+|\psi_{\rm ch}} \braket{\psi_{\rm ch}|0+}
	+\braket{1-|\psi_{\rm ch}} \braket{\psi_{\rm ch}|1-},\nonumber\\
	\frac{1}{2}\rho_1 = \braket{0-|\psi_{\rm ch}} \braket{\psi_{\rm ch}|0-}
	+\braket{1+|\psi_{\rm ch}} \braket{\psi_{\rm ch}|1+},\nonumber\\
	\end{eqnarray}
	which in matrix form, with the basis states ordered $\{\ket 0_A, \ket 1_A, \ket 2_A, \ket 3_A\}$, are given by
	\begin{eqnarray}
	\rho_0 &=& \left(\begin{array}{cccc}|a|^2 & ab^* & 0 & 0 \\
	a^*b & |b|^2 & 0 & 0 \\
	0 & 0 & |c|^2 & cd^* \\
	0 & 0 & c^*d & |d|^2\end{array}\right),\nonumber\\
	\rho_1 &=& \left(\begin{array}{cccc}|a|^2 & 0 & 0 & ad^* \\
	0 & |b|^2 & bc^* & 0 \\
	0 & b^*c & |c|^2 & 0 \\
	a^*d & 0 & 0 & |d|^2\end{array}\right).
	\end{eqnarray}
	Alice's optimal measurement is the Helstrom measurement, given by a projection in the eigenbasis of $\rho_0-\rho_1$. If Alice obtains an outcome corresponding to a positive eigenvalue, she guesses that Bob obtained the first bit, and if she obtains an outcome corresponding to a negative eigenvalue, then she guesses that Bob obtained the second bit. If Alice obtains an outcome corresponding to a zero eigenvalue, she can guess either the first or second bit, without altering her success probability (conditioned on such an outcome, Bob is equally likely to have obtained the first or second bit). Because the state space on Bob's side is three-dimensional, the situation is effectively three-dimensional on Alice's side too, but it is convenient to keep $\{\ket 0_A, \ket 1_A, \ket 2_A, \ket 3_A\}$ as a basis.
	
	We therefore need to find the eigenvalues of
	\begin{equation}
	\rho_0-\rho_1 = \left(\begin{array}{cccc}0 & ab^* & 0 & -ad^* \\
	a^*b & 0 & -bc^* & 0 \\
	0 & -b^*c & 0 & cd^* \\
	-a^*d & 0 & c^*d & 0\end{array}\right).
	\end{equation}
	The eigenvalues are 
	\begin{eqnarray}
	\label{eq:eigenvalues}
	\lambda_1 = \lambda_2 =0, \lambda_{3,4} &=&  \pm \sqrt{|ab|^2+|bc|^2+|cd|^2+|ad|^2}\nonumber\\
	&=&\pm\sqrt{(|a|^2+|c|^2)(|b|^2+|d|^2)}.
	\end{eqnarray}
	where we choose the $+$ sign for $\lambda_3$. The success probability is therefore given by
	\begin{eqnarray}
	p_{\rm cheat} &=& \frac{1}{2}+\frac{1}{4} {\rm Tr}  |\rho_0-\rho_1| = \frac{1}{2}+\frac{1}{4} \sum_i|\lambda_i|\nonumber\\
	&=& \frac{1}{2}[1+\sqrt{(|a|^2+|c|^2)(|b|^2+|d|^2)}].
	\end{eqnarray}
	Clearly, Alice's cheating probability is maximised when $|a|^2+|c|^2 = |b|^2+|d|^2 = 1/2$, giving a maximum cheating probability of 3/4 whenever this condition is met. One optimal choice for Alice is for example $|a|=|b|=1/\sqrt 2$ and $c=d=0$. In this case, $\rho_0 = \ket + \bra{+}$ and $\rho_1 = 1/2(\ket 0 \bra{0} + \ket 1\bra{1})$. Alice should measure in the $\ket +, \ket -$ basis, where $\ket \pm = (\ket 0 + \ket 1)/\sqrt 2$. With probability $1/4$, she will obtain the outcome ``$-$", and is then sure that Bob obtained the second bit (outcomes $(0,-)$ or $(1,+)$ for Bob). With probability $3/4$, she will obtain the outcome ``$+$", and then she guesses that Bob obtained the first bit. Her guess is in this case however only correct with probability 2/3, giving an overall cheating probability of 3/4.
	
	Choosing either $|a|$ or $|c|$ equal to $1/\sqrt 2$ and the other one equal to zero, and either $|b|$ or $|d|$ equal to $1/\sqrt 2$ and the other one equal to zero gives the same cheating probability. These optimal cheating strategies all require only a two-dimensional system on Alice's side.
	Choosing $|a|= |b|= |c|= |d|=1/2$ also gives $p_{\rm cheat} = 3/4$; these are examples of cheating states with high symmetry.
	As an example of a suboptimal cheating strategy, choosing three of the parameters equal to $1\sqrt 3$ and the remaining one equal to zero gives $p_{\rm cheat} = 1/2(1+ \sqrt 2/3)$, which is less than 3/4.

	\section{Preparation of Alice's entangled state}
	\label{app:cheatstateprep}
	In this appendix we will in detail describe the preparation of the state (\ref{eq:optcheatstate}), 
	$$
	|\Sigma\rangle = \frac{1}{\sqrt{2}} \left( |00\rangle_B|0\rangle_A +|++\rangle_B|1\rangle_A\right).
	$$
	This state can be prepared by means of a controlled-phase gate $U_{\rm{CP}}$, a Hadamard gate $H$, a controlled-controlled-phase gate $U_{\rm{CCP}}$, and local unitary operations.
	Controlled phase gates introduce tunable and conditional phase shifts. Specifically, 
	\begin{align*} 
	U_{\rm{CP}}  &= \mathbb{I} + [\exp(i\alpha) - 1]|11\rangle\langle11|, \\ 
	U_{\rm{CCP}} &= U_{\rm{CCP}} = \mathbb{I} + [\exp(i\beta) - 1]|111\rangle\langle111|.
	\end{align*}
	We used the quantum circuit in Fig.~\ref{fig:1}a to turn an initially separable state $|\psi_{\rm{in}}\rangle$ into a state which is equivalent to $|\Sigma\rangle$ up to local unitary operations. The parameters $\alpha, \beta$ describe the net operation $U(\alpha, \beta) = U_{\rm{CCP}}(\beta)(\mathbb{I} \otimes H \otimes \mathbb{I}) (U_{\rm{CP}}(\alpha)\otimes \mathbb{I})$.
	
	The input state can be parametrized by two tuples of angles, $\boldsymbol{\theta} = \{\theta_{i=1, 2, 3}\}$ and $\boldsymbol{\phi} = \{\phi_{i=1, 2, 3}\}$, as
	\begin{equation*}
	\label{eq:inputparam}
	|\psi_{\rm{in}}\rangle = \prod\limits_{i=1,2,3}^{\otimes} \left[\cos(\theta_i/2)|0\rangle + \sin(\theta_i/2)e^{i\phi_i}|1\rangle\right].
	\end{equation*}
	The degree of local-unitary equivalence $E(|\rm{a}\rangle, |\rm{b}\rangle)$ between states $|\rm{a}\rangle$ and $|\rm{b}\rangle$ can be quantified by an overlap maximized over all local unitary operations
	\begin{equation*}
	\label{eq:LUE}
	E(|\rm{a}\rangle, |\rm{b}\rangle) = \max_{\bf{v}} |\langle a | V_{\rm{LO}}({\bf{v}}) | b \rangle|^2,
	\end{equation*}
	where ${\bf{v}}$ is a tuple containing 9 parameters $\{A_j, B_j, C_j\}_{j=1,2,3}$ which parametrize the operation $V_{\rm{LO}} = V_{1}\otimes V_2 \otimes V_3$. Specifically, the parameters $A_j, B_j,$ and $C_j$ describe a $j$-th local operation
	\begin{equation}
	\label{eq:unitary}
	V_{j} = \left(\begin{array}{ll}
	\cos(A_j)\exp(i B_j) & -\sin(A_j)\exp(-iC_j) \\
	\sin(A_j)\exp(iC_j) & \cos(A_j)\exp(-i B_j)
	\end{array}\right).
	\end{equation}
	We maximized $E(|\Sigma\rangle, U(\alpha, \beta)|\psi_{\rm{in}}(\boldsymbol{\theta}, \boldsymbol{\phi})\rangle)$ numerically using the Broyden-Fletcher-Goldfarb-Shanno (BFGS) algorithm \cite{BFGS}.
	
	First we performed the optimization with all parameters being free and with multiple random initial guesses. From the set of optima we arbitrarily picked the parameter-tuples with $\theta_1 \approx 120^{\circ}$, fixed $\theta_1 = 120^{\circ}$ and performed the optimization again. We repeated this procedure to gradually fix also $\phi_1$, $\theta_2$, $\phi_2$, $\beta$, and $\varphi_3$, in this order. The parameters $\alpha$ and $\theta_3$ remained free in the last round of the optimization. The optimal parameters are listed in Tab.~\ref{tab:parameters}. With these parameters, the complement of $E$ to one is sufficiently small, $1 - E \approx 8\cdot10^{-11}$. 
	\begin{table}[h]
		\begin{tabular}{l|r||l|r}
			$\theta_1$	& $120.000^{\circ}$ 			& $\phi_1$ 	& $22.500^{\circ}$ \\
			$\theta_2$	& $90.000^{\circ}$ 			& $\phi_2$ 	& $90.000^{\circ}$ \\
			$\theta_3$	& $116.565^{\circ}$ 		& $\phi_3$ 	& $180.000^{\circ}$ \\
			$\alpha$	& $-138.190^{\circ}$	& $\beta$ 	& $180.000^{\circ}$
		\end{tabular}
		\caption{Optimal parameters for preparation of state $|\Sigma\rangle$ .}
		\label{tab:parameters}
	\end{table}
	
	Next, we initialized the circuit and the input state with the optimal parameters and performed tomography of the output quantum state. Employing the maximum-likelihood method \cite{Jezek2003} we reconstructed the density matrix $\rho_{\rm{exp},0}$ of actually prepared quantum state. Then we numerically maximized the expectation value
	\begin{equation*}
	\label{eq:correctionoverlap}
	\langle\Sigma|U_{\rm{LO}}(\mathbf{u})\rho_{\rm{exp},0} U_{\rm{LO}}^{\dagger}(\mathbf{u})|\Sigma\rangle
	\end{equation*}
	to find the corrective local operations $U_{\rm{LO}}$. The optimal $U_{\rm{LO}}$ not only implements the required local operation to finish the preparation of $|\Sigma\rangle$, but also compensates for some systematic errors. The parameters of the optimal unitaries are listed in Tab.~\ref{tab:Uparam}. We parametrize $U_{\rm{LO}} = U_{1} \otimes U_{2} \otimes U_{3}$ the same way as in case of $V$, see Eq.~(\ref{eq:unitary}). Note that these parameters are not unique, multiple solutions exist (due to insensitivity to global phase and phase periodicity).
	
	\begin{table}[h]
		\begin{tabular}{l|r|r|r}
			$i$ & $A_i$ [deg]     & $B_i$ [deg]      & $C_i$ [deg]      \\
			\hline
			1 & 41.315 & 49.770   & 136.535  \\
			2 & 48.385 & -37.718 & 42.637   \\
			3 & 29.367 & -1.225  & -177.329
		\end{tabular}
		\caption{Parameters of the corrective unitary operations.}
		\label{tab:Uparam}
	\end{table}
	
	An arbitrary unitary operation acting on a single polarization qubit can be easily implemented by a sequence of a quarter-wave plate, half-wave plate, and another quarter-wave plate. However, we merged the unitary $U_{\rm{LO}}$ into final projective measurements. It can be done because the output state is projected at the end onto a state $|\pi\rangle$ and the projection $\langle\pi|U_{i}|\eta\rangle$ is equivalent to $\langle\tilde{\pi}|\eta\rangle$ with $|\tilde{\pi}\rangle = U_{i}^{\dagger}|\pi\rangle$. We found the corresponding wave-plate angles for six-state tomography by means of numerical minimization. They are listed in Tab.~\ref{tab:WPtransformedAngles}. 
	This optimization reduces the number of components in the experimental setup, reducing experimental imperfections and losses which accumulate with each added component. 
	\begin{table}[h]
		\begin{tabular}{l|rr|rr|rr}
			$|\pi\rangle$ & HWP1 & QWP1  & HWP2 & QWP2  & HWP3 & QWP3   \\
			\hline
			$|0\rangle$  & -15.35 & 49.83  & 9.80   & 53.28  & 80.78 & 8.54   \\
			$|1\rangle$  & 29.65  & -40.17 & 91.52 & -53.28 & 27.24 & -8.54  \\
			$|+\rangle$  & 3.16   & 94.65  & 47.90  & 92.29  & 9.80   & 92.26  \\
			$|-\rangle$  & 43.50   & 85.35  & 2.90   & 2.29   & -35.20 & 2.26   \\
			$|R\rangle$  & 22.80   & -1.28  & 64.96 & 82.06  & 9.51  & 53.85  \\
			$|L\rangle$  & -20.93 & 1.28   & 19.96 & -7.94  & 54.51 & -36.15
		\end{tabular}
		\caption{Wave-plate angles for transformed projectors. All numbers are in degrees.}
		\label{tab:WPtransformedAngles}
	\end{table}

	\section{Experimental data}
	\label{app:expdata}
	In this appendix we present the full sets of experimental data. The tables contain measured counts $C$, relative frequencies (or estimated probabilities) $f$, and theoretically predicted probabilities $p_t$. Relative frequencies were calculated as a ratio of the number of respective counts to the total number of counts. Digits in parenthesis represent one standard deviation at the final decimal place.
	The statistical errors were computed using error-propagation and the fact that the count-rates obey Poisson distribution. 
	
	Table~\ref{tab:CorrectTransfer} shows data for the case when both parties were honest. Alice sent states $|00\rangle$, $|++\rangle$, $|--\rangle$, $|11\rangle$ and Bob measured in the $ZX$ basis. Table~\ref{tab:HonestAlarms} shows data for Bob's test measurements when he measured the incoming states in the $XX$ or $ZZ$ basis. 
	
	Table~\ref{tab:BobCheating} summarizes results for the situation when Alice was honest but Bob was cheating. This means that Bob has been performing square-root measurements. 
	
	The situation when Bob was honest but Alice was cheating is recorded in the last two tables. Table~\ref{tab:AliceCheating1} shows the relative frequencies of Alice's  correct and incorrect estimates of the values of Bob's bit $c$. 
	Table~\ref{tab:AliceCheating2} shows relative frequencies of different results of Alice's and Bob's measurements in the test phase of the protocol.
	Theoretically, Bob should only detect $|++\rangle$ or $|00\rangle$.

	\begin{table}[h]
		\begin{tabular}{l|l|rrrr|r}
			& $|\psi_{B}\rangle$ & \multicolumn{4}{c|}{$|\pi_{B}\rangle$} & $p_{\rm{s}}$\\
			\hline
			&                    & $|0+\rangle$       & $|0-\rangle$        & $|1+\rangle$       & $|1-\rangle$       &         \\
			\hline                    
			$C$ & $|00\rangle$ & 892      & 829       & 3        & 3        &         \\
			$f$   &                    & 0.52(1)  & 0.48(1)   & 0.002(1) & 0.002(1) & 1.00(2) \\
			$p_t$ &                    & 0.5    & 0.5     & 0    & 0    & 1       \\
			\hline
			$C$ & $|++\rangle$ & 823      & 2         & 782      & 7        &         \\
			$f$ &                    & 0.51(1)  & 0.0012(9) & 0.48(1)  & 0.004(2) & 0.99(2) \\
			$p_t$ &                    & 0.5    & 0     & 0.5    & 0    &  1       \\
			\hline                    
			$C$ & $|--\rangle$  & 7        & 824       & 15       & 867      &         \\
			$f$ &                    & 0.004(2) & 0.48(1)   & 0.009(2) & 0.51(1)  & 0.99(2) \\
			$p_t$ &                    & 0    & 0.5     & 0   & 0.5    &   1      \\
			\hline                    
			$C$ & $|11\rangle$ & 0        & 1         & 800      & 841      &         \\
			$f$ &                    & 0.000(0) & 0.0006(5) & 0.49(1)  & 0.51(1)  & 1.00(2) \\
			$p_t$ &                    & 0    & 0     & 0.5    & 0.5    &     1   
		\end{tabular}
		\caption{Measured counts $C$, relative frequencies $f$, and corresponding theoretical probabilities $p_t$ for the situation when both the parties were honest. $|\psi_{B}\rangle$ is a state which Alice sends to Bob. Bob measures projection onto $|\pi_{B}\rangle$. Here, $p_s$ is the probability of correct receipt, i.e. Bob gets erroneous bit with probability $1-p_s$.}
		\label{tab:CorrectTransfer}
	\end{table}

	\begin{table}[h]
		\begin{tabular}{l|l|rrrr|r}
			& $|\psi_{B}\rangle$ & \multicolumn{4}{c|}{$|\pi_{B}\rangle$} & $p_{\rm{FA}}$\\
			\hline
			& & $|00\rangle$       & $|01\rangle$        & $|10\rangle$        & $|11\rangle$        &          \\
			\hline
			$C$ & $|00\rangle$ & 1701     & 3         & 1         & 0         &          \\
			$f$ &                     & 0.998(1) & 0.002(1)  & 0.0006(5) & 0.000(0)  & 0.002(1) \\
			$p_t$ &                    & 1    & 0     & 0     & 0     &        0  \\
			\hline                    
			$C$ & $|11\rangle$ & 0        & 0         & 15        & 1592      &          \\
			$f$ &                    & 0.000(0) & 0.000(0)  & 0.009(2)  & 0.991(2)  & 0.009(2) \\
			$p_t$ &                    & 0    & 0     & 0     & 1     &        0  \\
			\hline
			& & $|++\rangle$       & $|+-\rangle$        & $|-+\rangle$        & $|--\rangle$        &          \\
			\hline
			$C$ & $|++\rangle$ & 1615     & 1         & 43        & 1         &          \\
			$f$ &                      & 0.973(4) & 0.0006(5) & 0.026(4)  & 0.0006(5) & 0.027(4) \\
			$p_t$ &                      & 1    & 0     & 0     & 0     &        0  \\
			\hline                      
			$C$ & $|--\rangle$   & 5        & 9         & 9         & 1660      &          \\
			$f$ &                     & 0.003(1) & 0.005(2)  & 0.005(2)  & 0.986(3)  & 0.014(3) \\
			$p_t$ &                     & 0    & 0     & 0     & 1     &         0
		\end{tabular}
		\caption{Data for Bob's test measurements in the case when Alice was honest. Here, $p_{\rm{FA}}$ is the probability of ``false alarm'', i.e. the probability that Bob aborts the protocol even if Alice is not cheating.}
		\label{tab:HonestAlarms}
	\end{table}

	\begin{table}[h]
		\begin{tabular}{l|l|rrrr|l}
			& $|\psi_{B}\rangle$ & \multicolumn{4}{c|}{$|\pi_{B}\rangle$}           &  $p_{\rm{CE}}$        \\
			\hline
			&             & $|\zeta_{0}\rangle|\xi_{0}\rangle$    & $|\zeta_{0}\rangle|\xi_{1}\rangle$    & $|\zeta_{1}\rangle|\xi_{0}\rangle$    & $|\zeta_{1}\rangle|\xi_{1}\rangle$    &          \\
			\hline 
			$C$ & $|00\rangle$      & 85       & 1215     & 4        & 114      &          \\
			$f$                    &  & 0.060(6) & 0.857(9) & 0.003(1) & 0.080(7) & 0.857(9) \\
			$p_t$                    &   & 0.125    & 0.729    & 0.021    & 0.125    &    0.729      \\
			\hline 
			$C$ & $|++\rangle$      & 1013     & 184      & 301      & 53       &         \\
			$f$ &  & 0.65(1)  & 0.119(8) & 0.19(1)  & 0.034(5) & 0.65(1)  \\
			$p_t$ &      & 0.729    & 0.125    & 0.125    & 0.021    &      0.729    \\
			\hline 
			$C$  & $|--\rangle$      & 64       & 253      & 384      & 1441     &          \\
			$f$ &  & 0.030(4) & 0.118(7) & 0.179(8) & 0.67(1)  & 0.67(1)  \\
			$p_t$ &  & 0.021    & 0.125    & 0.125    & 0.729    &      0.729    \\
			\hline 
			$C$ & $|11\rangle$      & 228      & 48       & 1360     & 253      &          \\
			$f$ &  & 0.121(7) & 0.025(4) & 0.72(1)  & 0.134(8) & 0.72(1)  \\
			$p_t$ &    & 0.125    & 0.021    & 0.729    & 0.125    &    0.729     
		\end{tabular}
		\caption{Bob was cheating, Alice was honest. Here, $p_{\rm{CE}}$ is the probability of Bob correctly estimating the incoming state.}
		\label{tab:BobCheating}
	\end{table}

	\begin{table}[h]
		\begin{tabular}{ll|lll}
			Alice's & Bob's &   &   &   \\
			estimate & bit $c$ & $C$ & $f$ & $p_t$ \\
			\hline
			0            & 0          & 856    & 0.53(1)     & 0.5  \\
			0            & 1          & 356    & 0.22(1)     & 0.25  \\
			1            & 0          & 17     & 0.010(3)    & 0  \\
			1            & 1          & 400    & 0.25(1)     & 0.25 
		\end{tabular}
		\caption{Alice was cheating, Bob was honest. The table shows the probabilities of Alice correctly/incorrectly guessing Bob's bit $c$.}
		\label{tab:AliceCheating1}
	\end{table}

	\begin{table}[h]
		\begin{tabular}{l|l|lll}
			$|\pi_{A}\rangle$ &  $|\pi_{B}\rangle$ & $C$ & $f$ & $p_t$ \\
			\hline
			$|0\rangle$ & $|00\rangle$        & 851   & 0.52(1)      & 0.5  \\
			$|0\rangle$ & $|01\rangle$        & 15     & 0.009(2)    & 0 \\
			$|0\rangle$ & $|10\rangle$        & 28     & 0.017(3)    & 0  \\
			$|0\rangle$ & $|11\rangle$        & 31     & 0.019(3)    & 0  \\
			$|1\rangle$ & $|++\rangle$        & 688   & 0.42(1)      & 0.5  \\
			$|1\rangle$ & $|+-\rangle$        & 7       & 0.004(2)    & 0  \\
			$|1\rangle$ & $|-+\rangle$        & 11     & 0.007(2)    & 0  \\
			$|1\rangle$ & $|--\rangle$        & 4       & 0.002(1)    & 0 
		\end{tabular}
		\caption{Test measurements for an honest Bob when Alice was cheating. Alice measured her qubit in the $Z$ basis and Bob measured his qubit in the $ZZ$ or $XX$ basis.}
		\label{tab:AliceCheating2}
	\end{table}
	
\end{document}